\documentclass[11pt]{article} % For LaTeX2e
\usepackage{amsmath,amssymb,amsthm,mathtools}
\usepackage[T1]{fontenc}

\usepackage{newtxtext}
\usepackage[section]{placeins}
\usepackage{float}
\usepackage{newtxmath}
\usepackage[margin=1in]{geometry}
\usepackage{booktabs}
\usepackage{algorithm}
\usepackage{multirow}
\usepackage{algpseudocode}
\usepackage{enumitem}
\usepackage{comment}
\usepackage{subcaption}
\usepackage{color,soul}
\usepackage{enumitem}
\usepackage{siunitx}
\usepackage{amsmath,amsfonts,bm}

\def\eqref#1{equation~\ref{#1}}
\def\1{\bm{1}}

\DeclareMathAlphabet{\mathsfit}{\encodingdefault}{\sfdefault}{m}{sl}
\SetMathAlphabet{\mathsfit}{bold}{\encodingdefault}{\sfdefault}{bx}{n}

\newcommand{\E}{\mathbb{E}}

\DeclareMathOperator*{\argmin}{arg\,min}

\usepackage[backref,colorlinks,linkcolor=blue,citecolor=blue,bookmarks=true]{hyperref}
\usepackage{mathtools, amssymb, amsthm, dsfont}
\numberwithin{equation}{section}
\usepackage{enumitem}
\usepackage[nameinlink,capitalize]{cleveref}
\usepackage{algorithm}
\usepackage[table]{xcolor}
\usepackage{natbib}
\usepackage{setspace}

\usepackage{amsmath, amsfonts, amsthm,amssymb}
\usepackage{verbatim, comment}
\usepackage{float}
\usepackage{bbm}
\usepackage{enumitem,graphicx,psfrag,epsf,placeins,subcaption}
\usepackage{algorithm}
\usepackage{algpseudocode}
\usepackage{multirow}
\usepackage{bm}
\usepackage{tabularx}
\usepackage{tikz} %for graphs
\usetikzlibrary{trees}
\usepackage{mathtools} %for \coloneq
\usepackage{thmtools}
\usepackage{thm-restate}
\usepackage{amssymb}

\usepackage{hyperref}
\usepackage{cleveref}
\Crefname{equation}{Eq.}{Eqs.}
\Crefname{figure}{Fig.}{Figs.}
\usepackage{enumitem}

\usepackage{booktabs}
\usepackage{longtable}
\usepackage{array}
\usepackage{multirow}
\usepackage{wrapfig}
\usepackage{float}
\usepackage{colortbl}
\usepackage{pdflscape}
\usepackage{tabu}
\usepackage{threeparttable}
\usepackage{threeparttablex}
\usepackage[normalem]{ulem}
\usepackage{makecell}
\usepackage{color}

\newtheorem{lemma}{Lemma}

\newtheorem{assumption}{Assumption}

\numberwithin{equation}{section}

\newcommand{\var}{{\rm var}}       % for variance
\newcommand{\Istar}{\mathcal{I}^\star}
\newcommand{\I}{\mathcal{I}}
\newcommand{\gammadag}{\bm \gamma^\dagger}
\newcommand{\leaves}{\rm le}
\newcommand{\sib}{{\rm sib}}
\newcommand{\an}{{\rm an}}
\newcommand{\de}{{\rm de}}
\newcommand{\pa}{{\rm pa}}
\newcommand{\g}{\bm \gamma}
\newcommand{\X}{\mathbf{X}}
\newcommand{\Y}{\mathbf{Y}}
\newcommand{\tilX}{\mathbf{\tilde  X}}
\newcommand{\T}{\mathcal{T}}
\newcommand{\K}{K_{\g}}

\usepackage{tikz}
\usetikzlibrary{shapes,arrows,positioning}

\usepackage{parskip}
\usepackage{mathtools, amssymb, amsthm, bbm}
\usepackage[nameinlink,capitalize]{cleveref}

\theoremstyle{plain}

\theoremstyle{definition}

\theoremstyle{remark}

\usepackage{url}
\usepackage{etoc}
\usepackage[title]{appendix}

\definecolor{light-gray}{gray}{0.95}

\definecolor{HHgreen}{HTML}{34C759}

\newlist{thmassumptions}{enumerate}{1}
\setlist[thmassumptions,1]{label=(\alph*), ref=\thetheorem(\alph*)}

\crefname{thmassumptionsi}{Assumption}{assumptions}
\crefname{thmassumptionsi}{Assumption}{Assumptions}

\title{Nonparametric Regression via Tree-Guided Feature
Aggregation}

\author{Sithija Manage\thanks{
    The authors gratefully acknowledge \textit{the National Institutes of Health grant T32HD113301: Artificial Intelligence and Precision Nutrition Training Program, Cornell University. }}\hspace{.2cm}\\
    Department of Statistics and Data Science, Cornell University\\
    and \\
    Y. Samuel Wang \\
    Department of Statistics and Data Science, Cornell University \\
    and \\
    Martin T. Wells \\
    Department of Statistics and Data Science, Cornell University}


\begin{document}

\maketitle
\begin{abstract}
In regression problems where covariates are naturally organized in a hierarchical tree structure, a central challenge is to select the resolution at which covariates enter the model. Determining this level of feature aggregation is of intrinsic scientific interest and can improve statistical efficiency by inducing sparsity.
While a rich literature addresses this problem in the linear setting, extending feature aggregation to the nonlinear setting remains an open challenge. In this work, we propose to simultaneously perform model selection and feature aggregation through a penalized Nadaraya-Watson-type estimator. Our proposed estimator, Kernel Regression with Tree‑EXploring AggregationS (KR-TEXAS), constructs adaptive penalty weights for the features based on pilot estimators of the regression function's partial derivatives. Under mild conditions, we establish model selection consistency for a well-defined target aggregation set, and our simulations show strong performance in both model selection and prediction. Finally, we demonstrate the utility of our procedure by applying it to a microbiome data set to predict short chain fatty acids. 
A user-friendly implementation of our procedure is available in the \texttt{R} package \texttt{krtexas}.
\end{abstract}

\section{Introduction}\label{sec:intro}
Many modern regression problems involve nonlinear relationships with covariates that are naturally organized in a hierarchical tree structure. A central statistical challenge in this setting is selecting the resolution at which covariates enter the model. One may include covariates at the individual level, or alternatively form aggregated variables by summing covariates according to the tree topology. The appropriate resolution is of intrinsic scientific interest, as it reveals the scale of granularity at which the underlying process operates, and thereby aids in interpretability. Moreover, aggregation can improve statistical efficiency by inducing sparsity, an advantage that is especially pronounced in the nonparametric regression framework that we develop here.

Such regression problems arise naturally across diverse scientific domains. In neuroimaging, for example, MRI data may be used to predict cognitive decline in Alzheimer's patients, where measurements at the individual voxel level may be used or aggregated by brain sub-region, with sub-regions themselves admitting further aggregation \citep{wang2022regularized}. In microbiome analysis, the covariates are abundances of operational taxonomic units (OTUs), and one may either use individual OTU counts directly or aggregate them at progressively coarser taxonomic levels according to the phylogenetic tree~\citep{wang2017constructing, wang2017structured, lee2025zero}. %Similar hierarchical structure exists in modeling product 
%, tree-guided perceptual mapping in marketing \citep{bendixen1995compositional}, financial risk allocation and stock market analysis structured by economic region \citep{fiori2025compositional, mantegna1999hierarchical},  
Psychologists have used spending records from different retailers to predict personality traits~\citep{gladstone2019traits}. Each of these merchants could also be categorized in a hierarchical tree; e.g., purchases at McDonald's could be included as a covariate or could be included at different levels of aggregation such as total fast food spending, total restaurant spending, or total food spending.

The appropriate level of aggregation may differ between covariate groups, motivating data-driven procedures for joint model selection across the tree. While various such procedures have been proposed for linear regression, we focus on the case in which the outcome is a nonlinear function of the covariates, whose relevant aggregation structure is unknown.

\subsection{Contribution}
As discussed below, most methods in the linear setting aggregate features by fusing together regression coefficients. However, extending feature aggregation to the nonlinear regression setting requires a very different approach. Specifically, we propose to simultaneously perform model selection and feature aggregation in nonparametric regression through a penalized Nadaraya-Watson type estimator. Our key innovation is the careful construction of adaptive penalty weights which use pilot estimators of the partial derivatives. 

Under mild conditions, we show that our procedure is model selection consistent for a targeted aggregation set. In simulations, we see that our procedure has good empirical performance in both model selection and prediction performance, even showing performance comparable to certain oracle methods. Finally, we demonstrate the utility of our procedure by applying it to a microbiome data set to predict short chain fatty acids.

\subsection{Previous work}
Various approaches have been proposed for linear regression with tree-structured covariates. In the linear setting, groups of covariates can be aggregated by forcing them to share a common regression coefficient. 
\cite{wang2017constructing} consider compositional covariates and propose a tree-guided fused lasso~\citep{tibshirani2005sparsity} penalty which encourages covariates in the same subtree to have identical coefficients. Closer to our work, \cite{bien2021rare} propose an overparameterized model where the coefficients for each observed covariate are a sum of terms corresponding to its ancestors in the hierarchical tree. An $L_1$ penalty for these terms acts similarly to a fused lasso, encouraging the same coefficient for covariates that are in the same sub-tree. \cite{wang2022regularized} use a similar parameterization, but modify the penalization and include additional constraints to handle compositional data.
These tree-structured regression methods have also been extended to regression with ``or'' operators over binary variables~\citep{chen2024tree}, linear regression with multivariate outcomes~\citep{mishra2024taro}, and Gaussian graphical models~\citep{wilms2022tree}. \cite{fu2025direct} recently proposed a modified penalization that avoids overparameterization.

In order to generalize these methods to a nonparametric setting, we use a regularized Nadaraya-Watson~\citep{nadaraya1964estimating, watson1964smooth} estimator which optimizes a leave-one-out criterion over possibly anisotropic bandwidths. These types of procedures may adapt to the sparsity in observed covariates~\citep{hall2007nonparametric, conn2019oracle} and are referred to as metric learning in the machine learning community~\citep{weinberger2007metric, noh2017metric}. In contrast to these previous works, our main focus is on model selection, not simply improved predictions. Thus, our method is most similar to~\cite{white2017variable} which also casts variable selection as a regularized bandwidth selection problem. However, because our model also includes aggregated variables, new techniques are required.

Our problem is also reminiscent of multi-index models~\citep{feng2013partial, xia2008multiple, yang2017learning} and sufficient dimension reduction~\citep{li1991sliced, ma2013review, globerson2003sufficient} which similarly posit that the true regression function depends only on a lower-dimensional linear transformation of the original features. However, in our setting, the possible transformations are constrained by the known hierarchical tree.

The remainder of this paper is organized as follows: Section \ref{sec:methods} describes and provides the mathematical formulation of KR-TEXAS, Section \ref{sec:theory} establishes theoretical guarantees, Section \ref{sec:numerics} details our numerical experiments, and Section \ref{sec:analysis} uses the proposed method to identify relevant microbial features to predict SCFA levels.

\section{Methodology}\label{sec:methods}

\subsection{Model}
\label{sec:methods:model}
We observe i.i.d. data \( (X_i, Y_i) \) for \( i = 1, \ldots, n \), where \( X_i = (X_{ij} \, : \, j = 1, \ldots, p) \in \mathcal{X}\) denotes the covariate vector for the \( i \)-th sample, and \( Y_i \in \mathbb{R} \) is the response. We assume $Y_i$ is related to $X_i$ through a nonparametric regression function:
\begin{equation}\label{eq:np_regLeaves}
Y_i = m(X_i) + \varepsilon_i,\end{equation}
where $\E(\varepsilon_i) = 0$, $\var(\varepsilon_i) = \sigma^2$ and $\varepsilon_i$ are independent of $X_i$. Let \( \X \in \mathbb{R}^{n \times p}\) denote the matrix whose $i$th row is $X_i$, $\Y = (Y_i\,:\, i = 1, \ldots, n)$, $\X_{\cdot, j}$ denote the $j$th column of $\X$, and $\X_{j, \cdot}$ denote the $j$th row of $\X$. Similarly, for a set $J$, let $\X_{\cdot, J}$ and $\X_{J, \cdot}$ denote the sub-matrix of $\X$ with columns/rows corresponding to $J$. We will also use $[n] := \{1, 2, \ldots, n\}$.

We further assume that the covariates $X_i$, are hierarchically structured according to a known rooted tree $\mathcal{T} = \{V, E\}$ where $V$ are the vertices with $T = \vert V\vert$ and $E$ is the set of directed edges. 
For a node $v \in \mathcal{T}$, the parent of $v$ is $\pa(v) = \{ u \in V \,:\, u \rightarrow v \in E\}$, the ancestors of $v$ are $\an(v) = \{u \in V \,:\, \exists \text{ a directed path from } u \text{ to } v \}$, the descendants of $v$ are $\de(v) = \{u \in V \,:\, \exists \text{ a directed path from } v \text{ to } u \}$, and the siblings of $v$ are $\sib(v) = \{u \in V \,:\, \pa(v) = \pa(u)\}$. When applied to a set, the notation should be read as the union of relevant sets; e.g., for $J \subset V$, $\pa(J) = \cup_{j \in J} \pa(j)$.  A node in $\T$ is a leaf node if it does not have descendants. We will use $\leaves(v)$ to denote the subset of $\de(v) \cup v$ which are leaves. The tree $\T$ has $p$ leaf nodes that each correspond to an observed feature; i.e., an element of $X_i$. Each internal node (i.e., non-leaf) represents a variable which is the sum of its descendant leaves. The hierarchical structure may be encoded in a binary matrix $ \bm A \in \{0,1\}^{T \times p} $ where each row corresponds to a node in the tree and each column corresponds to a leaf in the tree. We then set $A_{uv} =1$ if $v \in \leaves(u)$ and define the aggregated variables as
\begin{equation}
 \tilde X_i = \bm{A}X_i.   
\end{equation}
Thus, $\tilde X_i \in \mathbb{R}^T$ contains each of the $p$ originally observed variables (which correspond to leaves in $\T$) as well as the variables corresponding to the internal nodes $(T-p)$ such that $\tilde X_{i,v} = \sum_{u \in \leaves(v)} \X_{i,u}$. We will use $\tilX = \X \mathbf{A}^T$ to denote the $n \times T$ matrix where each row corresponds to an aggregated observation $\tilde X_i$. For example, in a gut microbiome analysis, $X_i$ may represent subject $i$'s species-level microbial abundance data, with each of the $p$ values corresponding to a relative abundance in $[0,1].$ In this example $\mathbf{A}$ may encode the taxonomic tree structure, informing us how the bacterial species are organized in a phylogenetic tree. For an internal node $v$ corresponding to a genus, $\tilde X_{i,v}$ is the sum of relative abundances for all species within genus $v$; i.e., the relative abundance of genus $v$. Thus,  $\tilde X_{i}$ contains subject $i$'s aggregated abundances at the genus, family, class, and higher taxonomic levels, along with the original relative abundances of the $p$ species.

In Fig.~\ref{fig:treeExample}, we show an example tree and the corresponding $\mathbf{A}$. There are $5$ observed features that correspond to the vertices $1$ through $5$ and the first $5$ rows of $A$. There are $3$ additional nodes that correspond to the aggregated variables. For example, node $6$ represents the aggregation of $4$ and $5$ so $\tilde \X_{\cdot, 6} = \X_{\cdot, 4} + \X_{\cdot, 5}$; similarly, node $7$ represents the aggregation $1, 4, 5$ so $\tilde \X_{\cdot, 7} = \X_{\cdot, 1} +  \X_{\cdot, 4} + \X_{\cdot, 5}$. In a slight abuse of notation, we will often use $v \in V$ to refer to both the node in $\T$ and its corresponding variable in $\tilde X_i$. 

\begin{figure}[h]
    \centering
    \begin{minipage}{0.2\textwidth}
        \begin{tikzpicture}[level distance=1.2cm,
          level 1/.style={sibling distance=1.2cm},
          level 2/.style={sibling distance=1.2cm},
          level 3/.style={sibling distance=1.2cm},
          every node/.style={circle, draw, minimum size=1cm, inner sep=0pt, align=center},
          edge from parent/.style={draw, ->, >=stealth}
          ]
          \node {\textcolor{blue}{$\tilde \X_{\cdot, 8}$}}
            child {node {\textcolor{blue}{$\tilde \X_{\cdot, 7}$}}
              child {node {$\tilde \X_{\cdot, 1}$}
              }
              child {node {\textcolor{blue}{$\tilde \X_{\cdot, 6}$}}
                child {node {$\tilde \X_{\cdot, 4}$}}
                child {node {$\tilde \X_{\cdot, 5}$}}
              }
            }
            child {node {$\tilde \X_{\cdot, 2}$}
            }
            child {node {$\tilde \X_{\cdot, 3}$}
            };
        \end{tikzpicture}
    \end{minipage}
    \hspace{1cm} % Adjust the space between the tree and the matrix
    \begin{minipage}{0.25\textwidth}
        \[
        \bm A =
        \begin{bmatrix}
        1 & 0 & 0 & 0 & 0 \\
        0 & 1 & 0 & 0 & 0 \\
        0 & 0 & 1 & 0 & 0 \\
        0 & 0 & 0 & 1 & 0 \\
        0 & 0 & 0 & 0 & 1 \\
        0 & 0 & 0 & \textcolor{blue}{1} & \textcolor{blue}{1} \\
        \textcolor{blue}{1} & 0 & 0 & \textcolor{blue}{1} & \textcolor{blue}{1} \\
        \textcolor{blue}{1} & \textcolor{blue}{1} & \textcolor{blue}{1} & \textcolor{blue}{1} & \textcolor{blue}{1} \\
        \end{bmatrix}
        \]
    \end{minipage}
    \caption{Example tree with 5 leaves, 3 nodes that correspond to the aggregated variables (blue), and corresponding $A$ matrix.}
    \label{fig:treeExample}
\end{figure}

In a typical setting, one might assume that the model is sparse in the observed features so that $m$ (and subsequently $Y$) only depends on a small subset, $\I \subset [p]$, of the elements of $X_i$; i.e., for some function $m_{\I}$:
\begin{equation}\label{eq:np_regProjLeaves}
m(X_i) = m_{\I}\left(P_{\I}(X_i)\right)
\end{equation}
where $P_{\I}(X_i)$ is the coordinate projection that returns the elements of $X_i$ corresponding to the set of indices $\I$. We assume that the model may also be sparse in the aggregated variables such that for some set $\mathcal{I} \subset [T]$:
\begin{equation}\label{eq:np_regProj}
m(X_i) = \tilde m_{\I}\left(P_{\I}(\tilde X_i)\right).
\end{equation}

%We will also use $P_{\I}(\X)$ to denote the coordinate projection which returns the columns of $\X$ corresponding to $\I$. 

Our primary goal will be to identify a small set $\I$ for which Eq.~\eqref{eq:np_regProj} holds. This is analogous to the original goal of~\cite{bien2021rare} who consider the linear model setting. In many applications, the set $\I$ is of intrinsic interest and aids in interpretability by indicating the appropriate level of data granularity for the regression task at hand. For example, according to phylogenetic niche conservatism theory \citep{yu2020new}, microbial species diverging from the same clade may be functionally redundant in some of their metabolic processes, and a scientist may be interested in determining when this redundancy occurs or does not occur. 

In addition, the model on the aggregated features may be substantially sparser than the model which only considers the observed features. For instance, in \Cref{fig:treeExample}, suppose that $m(X_i)$ only depends on $X_{i,1}$, $X_{i,4}$, and $X_{i,5}$; furthermore, suppose that $m(X_i)$ actually only depends on their sum, $\tilde X_{i,7}$. Then, the sparsest model in the observed features has 3 covariates whereas allowing for aggregated variables allows for a model with one covariate.
%$\tilde X_7$ as an aggregation of $X_1$, $X_4$, and $X_5$ decreases the number of features from $5$ to $3$. %In many applications, such as in microbiome analysis, the true function may depend only on a small portion of the original features.
Thus, allowing for aggregated variables may have the added benefit of increased statistical efficiency by requiring a smaller number of relevant features. 
This is especially beneficial in the nonparametric setting, where the curse of dimensionality makes each additional relevant feature much more costly.

By construction, $\tilX$ does not have full column rank, since some columns of $\tilX$ are sums of other columns in $\tilX$. Thus, there are many sets $\I$ which may satisfy \Cref{eq:np_regProj}, and the smallest of such sets is not unique. 
%For instance, this may occur when $\I$ includes $v$ as well as $s \in \de(v)$. 
To resolve this ambiguity, we define the \emph{target aggregation set}, denoted as $\Istar$ to be the set of $v \in T$ such that:
\begin{enumerate}[label=(\roman*)]
    \item $\partial_j m \neq 0$ for all $j \in \leaves(v)$,
    \item $\partial_j m = \partial_k m$ for all $j,k \in \leaves(v)$,
    \item $\exists s \in \leaves(\sib(v))$ and $j \in \leaves(v)$ such that $\partial_s m \neq \partial_j m$.
\end{enumerate}
For any $v \in \Istar$, (i) implies that all nodes in $\leaves(v)$ are relevant for predicting $Y$, and (ii) implies that they can be aggregated without losing predictive. If condition (iii) is not satisfied, then the covariates in $\leaves(v)$ should actually be aggregated at a level higher than $v$ itself.
%implies that if a node $v$ is included in the target set, then any other node which is an aggregation of $v$ is not included. 
Although $\I^\star$ cannot be larger than the sparsest set $\I$ which satisfies \Cref{eq:np_regProjLeaves}, it may not be one of the smallest sets that satisfies \Cref{eq:np_regProj}. However, we choose this target set to prioritize interpretability over the statistical efficiency of a potentially smaller but less interpretable set. In particular, if $v \in \Istar$, then $\de(v)\cap \Istar = \emptyset$; thus the practitioner can conclude that $v$ is the sharpest resolution relevant to the prediction task and that data at a more granular level are not needed. %The ``all-or-nothing'' aggregation implies that either all variables in $\leaves(v)$ appear only in the model aggregated as $\tilde X_v$ or else $\tilde X_v$ is not included in the model.

\subsection{Estimator: KR-TEXAS}
\label{sec:estimator}
Our proposed method, Kernel Regression with Tree-EXploring AggregationS estimator (KR-TEXAS), solves the following problem:
\begin{equation}
    \hat{\bm{\gamma}} = \argmin_{\bm\gamma \in \mathbb{R}^{T}}L_n(\g; \tilX, \Y)
 + \lambda_n \sum_{v=1}^T \hat w_v\gamma_v\\
 \label{eq:kr_texas_opt}
\end{equation}
where
 \begin{equation}\small
 L_n(\g; \tilX, \Y) = \frac{1}{n} 
\sum_{i=1}^{n}\left(Y_i - \hat m_{-i}(\tilde X_i; \tilX, \Y, \g) \right)^2, \quad  
     \hat m_{-i}(\tilde X_i; \tilX, \Y, \g)  = \frac{\sum_{j\neq i} \K(\tilde X_j, \tilde X_i) Y_j}{\sum_{j\neq i}\K(\tilde X_j, \tilde X_i)},
 \end{equation}
and $\K(\tilde X_i, \tilde X_j)$ denotes the Gaussian kernel on the aggregated features with inverse bandwidth parameters $\g = \{\g_v\}_{v = 1}^T$ such that $\g_v \geq 0$:
\begin{equation}
\begin{aligned}
    \K(\tilde X_i,\tilde X_j) &= \exp\left(- \sum_{v=1}^T \gamma_v (\tilde X_{i,v} - \tilde X_{j,v})^2 \right).
\end{aligned}
\label{eq:NW_def}
\end{equation}
The estimator can be defined similarly using the Epanechnikov kernel, for which we derive the conditional bias and variance in Supplementary Material \ref{app:BiasVar}. %These expressions show that $\bm \gamma$ determines both the local smoothing (through $\det(\bm A_\gamma^\top\bm A_\gamma)$ in the variance term) and the directions along which the bias depends on the derivatives of $m$ and the density of the predictors, where $\bm A_\gamma = \operatorname{diag}(\sqrt{\bm\gamma})\,\bm A$.
The first term in \Cref{eq:kr_texas_opt} is the Nadaraya-Watson~\citep{nadaraya1964estimating, watson1964smooth} leave-one-out cross-validation objective. The second term is a weighted $L_1$ penalty where $\lambda_n$ is a global regularization parameter and $\hat{w}_{v}$ is a coordinate specific penalization for $\gamma_v$. Similar to the Adaptive Lasso~\citep{zou2006adaptive}, the weights are computed from a pilot estimator before being used for the final estimates. Although the exact value of $\gamma_v$ does not have a straightforward interpretation, a larger value $\gamma_v$ (when fixing the variance of $\tilde X_{i,v}$) implies that predicted values of $Y$ are more sensitive to perturbations in $\tilde X_{i,v}$; furthermore, $\gamma_v = 0$ implies that predicted values of $Y$ do not depend on $\tilde X_{i,v}$ at all. Thus, we let $\hat \I = \{v\,:\, \hat \gamma_v > 0\}$.  
In \Cref{thm:selectionConsistency}, we show that when the weights $\hat w_v$ are chosen in a data-dependent way described below, KR-TEXAS achieves model selection consistency so that $P(\hat \I = \Istar) \rightarrow 1$ as $n\rightarrow\infty$.

\subsubsection{Construction of adaptive weights}
\label{sec:adaptiveweights}
Intuitively, in order to achieve model selection consistency we want weights $\hat w_v$ such that for a sequence of $\lambda_n$, $\lambda_n\hat w_v \rightarrow_p \infty$ for all 
$v \not\in \Istar$ and $\lambda_n\hat w_v \rightarrow_p 0$ for all $v \in \Istar$. To construct such weights, we require pilot estimators of the partial derivatives of $m$ with respect to the observed features. We will use $\partial_v m$ and $\widehat{\partial_v m}$ to denote the derivative of $m$ with respect to $X_{i,v}$ and its estimate, respectively. We then compute the weights as:
\begin{equation}\small
\begin{aligned}
    \label{eq:weights} 
    C_{v, 1} &= \binom{\vert \leaves(v) \vert}{2}^{-1}\sum_{j,k\in \leaves(t)}\left\vert \widehat{\partial_j m} - \widehat{\partial_k m}\right\vert^2_2,\\  
C_{v, 2} &= \frac{1}{\vert \leaves(v)\vert} \sum_{j \in \leaves(v)}\left \vert\widehat{\partial_j m}\right\vert^2_2,\\ 
C_{v, 3} &= \left(\vert \cup_{k\in \sib(v)} \leaves(k) \vert \times \vert \leaves(v)\vert \right)^{-1} \sum_{j\in \leaves(v)} \sum_{k\in \sib(v)} \sum_{l\in \leaves(k)}\left\vert\widehat{\partial_j m} - \widehat{\partial_l m} \right\vert^2_2,\\
\hat w_{v} &= (n^{a_2}C_{v,1})^b + C_{v,2}^{-b} + C_{v,3}^{-b}, 
\end{aligned}
\end{equation}
where $a_2, b >0 $ are the tuning parameters that will be discussed later. 

To provide intuition for the construction of $\hat w_v$, note that each node in $v \in \T$ falls into one of the following 4 mutually exclusive categories where categories 1, 2, and 3 correspond to $\mathcal{T}\setminus {\Istar}$ and category 4 corresponds to $\Istar$:
\begin{enumerate}
    \item 
    For all $j \in \leaves(v)$, $\partial_j m = 0$,
    \item 
    There exists $j \in \leaves(v)$ such that $\partial_j m \neq 0$, but there exists $j,k \in \leaves(v)$ such that $\partial_j m \neq \partial_k m$, 
    \item 
    For all $j,k \in \leaves(v)$, $\partial_j m = \partial_k m$ and $\partial_j m \neq 0$, but $\partial_j m = \partial_k s$ for all $s \in \bigcup_{k \in \sib(v)}\leaves(k)$,
    \item 
    For all $j,k \in \leaves(v)$, $\partial_j m = \partial_k m$ and $\partial_j m \neq 0$, but $\exists s \in \bigcup_{k \in \sib(v)}\leaves(k)$ such that
    $\partial_j m \neq \partial_k s$. 
\end{enumerate}
Suppose that for some $a_1 > 0$, $\max_{v \in [p]}\vert \widehat{\partial_v m} - \partial_v m \vert_2 \lesssim n^{-a_1}$. If $v$ is in Category 1, it is not in $\I^\star$ since none of its leaves effect $Y$; in this case, $C_{v,2} = O(n^{-a_1})$ so $w_{v,n} = \Omega(n^{a_1b})$. If $v$ is in Category 2, it is not in $\I^\star$ because it has leaves which do not share the same derivative and thus those leaves should not be aggregated; in this case, $C_{v,1} = \Omega(1)$ so $w_{t,n} = \Omega(n^{a_2b})$. If $v$ is in Category 3, even though all nodes in $\leaves(v)$ are active and have the same derivative, the leaves of $v$'s siblings have the same derivatives as $\leaves(v)$. Thus the leaves of $v$ should actually be aggregated at a level higher than $v$ (i.e., at some $\an(v)$) and $v$ itself should not be in $\I^\star$. In this case, $C_{v,3} = O(n^{-a_1})$ so $w_{t,n} = \Omega(n^{a_1b})$. Finally, if $v$ is in Category 4, it should be in $\I^\star$. In this case, $C_{v,2}$ and $C_{v,3}$ are both $\Omega(1)$ while $C_{v,1} = O(n^{-a_1b})$; thus, $w_{t,n} = O(n^{(a_2-a_1)b})$. If $0< a_2 < a_1$, then the weights for nodes in Categories 1-3 will increase with $n$ and the weights for nodes in Category 4 will decrease with $n$.

\subsubsection{Practical concerns}\label{sec:practicalConcern}

Estimating derivatives is known to be more difficult than estimating the nonparametric regression function itself; see e.g., \cite{yatracos1989estimation}. Nonetheless, our method can substantially improve our predictions of $Y$ even with noisy pilot estimators; this is similar to ``plug-in'' methods which use estimates of derivatives to select a bandwidth~\cite{gasser1991flexible, ruppert1995effective}. Many approaches for estimating derivatives have been proposed~\citep{stone1982optimal, ruppert1994multivariate, fan1995data, de2013derivative, dai2016optimal, wang2019derivative, liu2023estimation, liu2026optimal}, and Theorem~\ref{thm:selectionConsistency} allows for any method which estimates the derivatives sufficiently well. 

%. Other proposed procedures directly estimate the derivative but primarily focus on the univariate setting~\citep{}.

In practice, we find that the following procedure for calculating weights performs well empirically by adapting to sparsity in $m$. We first learn a metric, $\check \g$, over just the observed features (i.e., leaf nodes) using leave-one-out cross validation: 
\begin{equation}  \label{eq:leaf_opt}
\check{\g} = \argmin_{\bm\gamma \in \mathbb{R}^{p}}  \frac{1}{n} 
\sum_{i=1}^{n}\left(Y_i - \hat m_{-i}(\X, \Y; \gamma) \right)^2.
 \end{equation}
Next, we use the learned metric to estimate the derivative at each point $X_i$ using local linear regression; using $\check \g$ allows for better estimates because it may adapt to the potential sparsity in $m$~\citep{hall2007nonparametric, conn2019oracle}.
Furthermore, when estimating a derivative, it is generally beneficial to ``oversmooth'' relative to what might be optimal for estimating the regression function itself \citep{fan1996local}. Thus, instead of directly using $\check \g$, we use $\check \g^z = \{ \check \gamma_v^{z}\}_{v = 1}^p$ for some $0 < z < 1$. The optimal amount of oversmoothing depends on the true sparsity of the regression function, which is unknown; however, in Section~\ref{sec:numerics}, we see that setting $z = 3/4$ works well empirically. Putting everything together, we calculate $\hat \beta_{i,v}$ as an estimate of $\left.\partial_v m\right\vert_{X_i}$ by solving:     
\begin{equation}  \label{eq:leaf_opt_llr}
\hat \beta_i = \arg\min_{\beta\in \mathbb{R}^{p+1}}   
\sum_{j=1}^n\left(Y_j - \beta_0 + \sum_{v = 1}^p \beta_v (X_{i,v} - X_{j,v})\right)^2K_{\check \g^z}(X_i, X_j).
\end{equation}

Furthermore, when plugging in the estimated derivatives to calculate the weights, the estimates for points near the boundary will typically be less reliable than points in the ``interior.'' Thus, for each observation $X_i$, we calculate $K_i = \sum_{j \neq i} K_{\check \g}(X_i, X_j)$, and let $\mathcal{J}$ denote the set of $m = n/10$ points with the largest $K_i$ scores. We subsequently calculate $C_{v,1},  C_{v,2}, C_{v,3}$ using only the points in $\mathcal{J}$; e.g., $  C_{v,1} = \binom{\vert \leaves(v) \vert}{2}^{-1}\sum_{u, w \in \leaves(v)}\sum_{i \in \mathcal{J}} \vert \hat \beta_{i,w} - \hat \beta_{i,u} \vert$. 
For model selection consistency, the exact values of $C_{v,1}, C_{v,2}, C_{v,3}$ are not crucial as long as they converge to either $0$ or some non-zero value correctly. If $\partial_v m$ and $\partial_u m$ only differ in regions near the boundary, the modified weights may indeed differ qualitatively from the weights calculated using the entire domain. However, if $\partial_u m \neq \partial_v m$ somewhere implies that $\partial_u m \neq \partial_v m$ almost everywhere, then the modified weights will suffice. This holds, for example, if $m$ is a real analytic function in an open connected domain. %Ultimately, this choice can be viewed as taking fewer, but potentially higher quality, measurements of the relevant derivatives.  

%It is important to note that this optimization problem is non-convex, and the solution to the optimization is not guaranteed to be a global minimizer of the loss function. Our optimization algorithm for a fixed lambda is described in Algorithm \ref{alg:gaussianKRTEXAS_fixedlambda}, and two-stage training procedure is given in Algorithm \ref{alg:krtexas_fit_full}.

Both Eq.~\eqref{eq:kr_texas_opt} and Eq.~\eqref{eq:leaf_opt} are non-convex. However, we may still find local minima using gradient based methods and use random initializations to hopefully find a global minimum. We can calculate the gradient analytically, and our implementation in \texttt{R} uses L-BFGS with box constraints~\citep{byrd1995limited}. The simulations in \Cref{sec:numerics} show that 10-30 random restarts typically suffice for good empirical performance. In particular, \Cref{thm:selectionConsistency} shows that asymptotically any stationary points of~\Cref{eq:kr_texas_opt} will not contain false positives; however, we require a global minimum to ensure that there are no false negatives. 

To select the global regularization parameter $\lambda_n$ in \Cref{eq:kr_texas_opt}, we estimate the weights $\hat w_{v}$ once using all the data. Then, we use $K=5$ fold cross validation to select $\lambda_n$. The estimated $\hat \g$ and $\hat \I$ is calculated by solving \Cref{eq:kr_texas_opt} using the selected $\lambda_n$ and all observed data. We note that $\hat \I$ may not satisfy the desired constraint that $v \in \hat \I$ implies that $\de(v) \cap \hat \I = \emptyset$. In the simulations, we leave $\hat \I$ unchanged, but in practice one could post-process the set to explicitly enforce the constraint. 

\begin{comment}
into a set which preserves this property in two different ways; for any  $v \in \hat \I$ such that $\de(v) \cap \hat \I \neq \emptyset$ we may:
\begin{enumerate}
    \item \emph{Aggregate up}: Remove any $\de(v)$ from $\hat \I$
    \item \emph{De-aggregate down}: Remove $v$ from $\hat \I$ and add the set 
    \[\hat D_v = \textcolor{red}{\arg}\min_{D \subseteq \de(v)} \vert D \vert \qquad \text{ s.t. } \leaves(D) = \leaves(v) \text{ and } \de(D) \cap \hat \I = \emptyset. \]
\end{enumerate}
\end{comment}

\begin{algorithm}[t]
\caption{KR-TEXAS Two-Step Procedure}
\begin{algorithmic}[1]
\Statex \textbf{Step A: Adaptive weight construction}
\For{$r \in [M_1]$}
\State Use random initialization to get a local optimum  $\hat \g_{r}$ of~\Cref{eq:leaf_opt}
\EndFor
\State Set $\hat \g_{0} =  \argmin_{r}  \frac{1}{n} 
\sum_{i=1}^{n}\left(Y_i - \hat m_{-i}(\X, \Y; \hat\g_r) \right)^2.$
\State Estimate gradients $\{\widehat{\partial_j m}\}_{j=1}^p$ using LLR via~\Cref{eq:leaf_opt_llr} with $\hat \g_{0}$
\State Compute weights $\hat w_v$ via  \Cref{eq:weights} for all $v \in \mathcal{T}$ using $\{\widehat{\partial_j m}\}_{j=1}^p$
\Statex \textbf{Step B: Penalized fitting with adaptive weights}      
\State Split $[n]$ into $k = 1\ldots K$ folds with train/test sets denoted as $D_{train}^{(k)}$ and $D_{test}^{(k)}$ 
\For{$m \in [M_2]$}
\For{$\lambda \in \Lambda$ and $k = 1:K$}
        \State Use random initialization then warm starts to get a local optimum $\hat \g_{\lambda,m}^{(k)}$ of
        \[
        %\mathcal{L}(\g; \tilX_{D_{train,m}^{(k)}}, \Y_{D_{train,m}^{(k)}},\lambda)= 
        \frac{1}{\vert D_{train}^{(k)}\vert}\sum_{i\in D_{train}^{(k)}} (Y_i - \hat m_{-i}(\tilde X_i; \tilX_{D_{train}^{(k)}}, \Y_{D_{train}^{(k)}}, \g))^2
        + \lambda \sum_{v=1}^T \hat w_v \g_v.
        \]
\EndFor
\EndFor
\State Select $(\lambda^\star, m^\star) = \arg \min_{\lambda \in \Lambda, m} \sum_{k=1}^K \sum_{i \in D_{test,m}^{(k)}}(Y_i - \hat m_{-i}(\tilde X_i; \tilX_{D_{train}^{(k)}}, \Y_{D_{train}^{(k)}}, \hat\g_{\lambda, m}^{(k)}))^2$

\State Use $m^\star$ random initialization to get local optimum $\hat \g$ of
\[
        \mathcal{L}(\g; \tilX, \Y,\lambda^\star)
        = \frac{1}{n}\sum_{i\in [n]} (Y_i - \hat m_{-i}(\tilde X_i; \tilX, \Y, \g))^2
        + \lambda^\star \sum_{v=1}^T \hat w_v \g_v.
\]
\end{algorithmic}
\label{alg:krtexas_fit_overall}
\end{algorithm}

The pseudocode for the entire procedure is given in \Cref{alg:krtexas_fit_overall}, and additional implementation details are given in Supplementary Material \ref{app:implementation}. For each fold $k$, let 
$D_{\text{test}}^{(k)} \subset [n]$ denote the indices assigned to 
the $k$-th held-out fold, and let 
$D_{\text{train}}^{(k)} = [n] \setminus D_{\text{test}}^{(k)}$ 
denote the corresponding training indices.
A user-friendly software implementation of KR-TEXAS can be found at \url{https://github.com/sithijamanage/krtexas}.

\section{Theoretical guarantees}\label{sec:theory}
We now show that KR-TEXAS can recover the target aggregation set with probability going to $1$ as $n \rightarrow \infty$. Specifically, we show that asymptotically, there will be no false positives for any stationary point of \Cref{eq:kr_texas_opt}, and when $\hat \I$ is estimated from the the global minimum, then $P(\hat \I = \I^\star) \rightarrow 1$. All proofs are given in Supplementary Material \ref{app:selectionConsistency}. 

The first condition in \Cref{assump:bounded} is a common requirement that the covariates and responses are bounded, and the second condition essentially requires that the covariates have a density uniformly bounded from below. \Cref{assump:contFunc} requires that the true regression function $m$ be sufficiently smooth. 
\begin{assumption}
\label{assump:bounded}
Suppose $\vert X_{i} \vert_2 < B$ and $\vert Y_i \vert < B$. Furthermore, the random vector $X_i$ has support on a set $\mathcal{X}$ where $\min_{x \in \mathcal{X}}\E_{X_i}(\K(\tilde x,\tilde X_i)) \geq c_{f}\prod_{v: \g_v \neq 0}\g_v^{-1/2}$ for some constant $c_{f}>0$ for all $\g$. 
\end{assumption}

\begin{assumption}
 \label{assump:contFunc} The function $m$ belongs to a H\"{o}lder class with smoothness $\beta > 1$.
\end{assumption}

In \Cref{sec:practicalConcern} we detail a specific procedure for estimating the derivatives; however, any method can be used as long as the other tuning parameters are chosen to satisfy Assumption~\ref{assump:tuning}. For example, if $\beta = 2$ under \Cref{assump:contFunc}, when estimating $m$ for the observed features $\E(\Vert \hat m - m \Vert_2) \lesssim n^{-4/(4+p)}$ and using the derivatives of $\hat m$ then it produces a rate of $\E(\Vert \widehat{\partial_j m} -\partial_j m \Vert_2) \lesssim n^{-2/(4+p)}$ \citep{yatracos1989estimation}. In general, setting $b = 1$ with $0 <d < a_2 <a_1$ will suffice. However, $b$ could also depend on $n$ (e.g., $b = \log(n)$) which would allow for more flexibility in choosing $d$. Although the theory assumes that $\lambda_n \asymp n^{-d}$ for some known $d$, in practice, we select it through cross-validation. 

\begin{assumption}\label{assump:tuning}
Suppose the pilot estimators satisfy $\max_{j \in [p]}\left\Vert \widehat{\partial_j m} -  \partial_j m  \right\Vert_2^2\lesssim n^{-a_1}$ for $a_1 > 0$ and $\lambda_n \asymp n^{-d}$ for $d>0$. As $n\rightarrow \infty$, the tuning parameters $a_2,$ $b,$ and $d$ satisfy
    \begin{align*}
        a_2b-d>0, \quad a_1b - d >0, \quad \text{ and }  \quad \left(a_2 - a_1\right)b - d < 0.
\end{align*}
\end{assumption}

Under these assumptions, \Cref{lem:noFalsePos} shows that, asymptotically, any stationary point of KR-TEXAS does not produce false positives. 
\begin{restatable}{lemma}{noFalsePos}
\label{lem:noFalsePos}
Let $G^{\dagger}$ denote the set of all stationary points of~\Cref{eq:kr_texas_opt} and suppose Assumptions~\ref{assump:bounded}, \ref{assump:contFunc}, and \ref{assump:tuning} hold. Then, as $n$ increases
\[P\left(\max_{\g \in G^\dagger}\max_{k \not \in \I^\star}\g_k = 0\right) \rightarrow 1.\] 
\end{restatable}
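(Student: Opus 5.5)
The plan is the usual KKT argument for adaptive-lasso-type penalties: we show the penalty slope $\lambda_n\hat w_k$ eventually dominates any possible slope of the loss. Fix a stationary point $\g\in G^\dagger$. Because of the box constraint $\gamma_k\ge 0$, stationarity means that for every coordinate $k$ with $\gamma_k>0$,
\[
\frac{\partial L_n(\g;\tilX,\Y)}{\partial \gamma_k} + \lambda_n \hat w_k = 0 .
\]
So it suffices to find an event $\mathcal{E}_n$ with $P(\mathcal{E}_n)\to 1$ on which, simultaneously for all $k\notin\Istar$ and all $\g\in[0,\infty)^T$,
\[
\Bigl|\frac{\partial L_n}{\partial \gamma_k}(\g)\Bigr| < \lambda_n\hat w_k .
\]
On $\mathcal{E}_n$ the equality above cannot hold for any $k \notin \Istar$. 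Hence $\gamma_k = 0$ for every stationary point, uniformly over $G^\dagger$.

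\textbf{Step 1: uniform bound on the gradient of the loss.} Write $\hat m_{-i}=\sum_{j\ne i}\K(\tilde X_j,\tilde X_i)Y_j/\sum_{j\ne i}\K(\tilde X_j,\tilde X_i)$ and $D_{ij,k}=(\tilde X_{i,k}-\tilde X_{j,k})^2$. Differentiating gives
\[
\frac{\partial \hat m_{-i}}{\partial\gamma_k} = -\sum_{j\ne i} \pi_{ij}\, D_{ij,k}\,(Y_j-\hat m_{-i}),
\qquad
\pi_{ij}=\frac{\K(\tilde X_j,\tilde X_i)}{\sum_{l\ne i}\K(\tilde X_l,\tilde X_i)} ,
\]
which is a weighted covariance of $D_{ij,k}$ and $Y_j$ under the probability weights $\pi_{ij}$.

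By Assumption~\ref{assump:bounded}, $|Y_i|<B$ and $|\hat m_{-i}|\le B$. Since $|X_i|_2<B$, each aggregated coordinate is bounded by $\sqrt{p}B$, so $D_{ij,k}\le 4pB^2$. Therefore
\[
\Bigl|\frac{\partial L_n}{\partial\gamma_k}\Bigr| \le \frac{2}{n}\sum_i |Y_i-\hat m_{-i}|\,\Bigl|\frac{\partial \hat m_{-i}}{\partial\gamma_k}\Bigr| \le C_0 := 2\cdot 2B\cdot 4pB^2\cdot 2B ,
\]
a deterministic constant independent of $\g$ and $n$. This uniformity is what allows a single event to control all stationary points at once.

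One caveat: the weights $\pi_{ij}$ require a positive denominator. For the Gaussian kernel this holds automatically, and Assumption~\ref{assump:bounded} is only needed if we want a sharper, $\g$-dependent bound; I would note this and then not use it.

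\textbf{Step 2: the adaptive weights diverge faster than $\lambda_n^{-1}$.} Let $\mathcal{E}_n$ be the event that $\max_j\|\widehat{\partial_j m}-\partial_j m\|_2^2\le Mn^{-a_1}$. By Assumption~\ref{assump:tuning}, for any $\epsilon$ there is an $M$ making $P(\mathcal{E}_n)\ge 1-\epsilon$ eventually. On $\mathcal{E}_n$ we go through the category analysis of Section~\ref{sec:adaptiveweights}, using the triangle inequality $|\,|\hat a-\hat b|-|a-b|\,|\le|\hat a-a|+|\hat b-b|$ in $L_2$.
\begin{itemize}
\item Category 1: $C_{k,2}\lesssim n^{-a_1}$, so $\hat w_k\ge C_{k,2}^{-b}\gtrsim n^{a_1 b}$.
\item Category 3: every leaf of $k$ shares its derivative with all sibling leaves, so $C_{k,3}\lesssim n^{-a_1}$ and $\hat w_k\gtrsim n^{a_1b}$.
\item Category 2: some pair $j,l$ has $\|\partial_j m-\partial_l m\|_2=\delta>0$, so $C_{k,1}\ge c\delta^2/2$ for large $n$, giving $\hat w_k\ge (n^{a_2}C_{k,1})^b\gtrsim n^{a_2 b}$.
\end{itemize}
Here I am reading the squared $L_2$ error rate as $n^{-a_1}$. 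If the paper's convention makes it $n^{-a_1/2}$, only the constants in the exponents shift, not the argument. Since the tree is fixed and finite, $\min_{k\notin\Istar}\lambda_n\hat w_k\gtrsim n^{\min(a_1,a_2)b-d}\to\infty$ by Assumption~\ref{assump:tuning}. This eventually exceeds $C_0$, and the claim follows on $\mathcal{E}_n$. Letting $\epsilon\to0$ completes the proof.

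\textbf{Main obstacle.} I expect Step 2 to be the main difficulty rather than Step 1. One issue is making the category lower bounds rigorous with the paper's norm conventions. The other is ensuring $C_{k,1}$ in Category 2 is bounded away from zero using the population separation $\|\partial_jm-\partial_lm\|_2>0$. That separation must be interpreted as a fixed positive constant, independent of $n$, and I would state this explicitly. The uniform-in-$\g$ bound of Step 1 is elementary, because everything is bounded, and requires no empirical-process argument.
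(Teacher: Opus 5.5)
Your proposal is correct and follows the paper's argument: the stationarity (KKT) condition for $\gamma_k>0$, a deterministic bound on $|\partial L_n/\partial\gamma_k|$ that is uniform in $\g$ (the paper's Lemma~\ref{lem:boundedDerivative}), and the three-category analysis showing $\lambda_n\hat w_k\to\infty$ in probability for every $k\notin\Istar$ under Assumption~\ref{assump:tuning}. Your bound $4pB^2$ on $(\tilde X_{i,k}-\tilde X_{j,k})^2$ is slightly more careful than the paper's constant, which implicitly treats aggregated coordinates as bounded by $B$; since any fixed constant suffices, the conclusion is the same.
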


For a set $\I \subset [T]$, we say that $\I$ is a super-model of $\I^\star$ if the row span of $A_\I$ contains the row span of $A_{\I^\star}$; this implies that $\text{span}(\tilX_{\cdot, \I^\star}) \subseteq \text{span}(\tilX_{\cdot, \I})$. Similarly, we say that $\I$ is a strict non-super-model of $\I^\star$ if the row span of $A_\I$ does not contain the row span of $A_{\I^\star}$; this implies that $\text{span}(\tilX_{\cdot, \I^\star}) \not \subseteq \text{span}(\tilX_{\cdot, \I})$. Assumption~\ref{assump:seperation} is a ``minimum signal strength'' assumption which ensures that the optimal predictions for a strict non-super-model must be strictly worse than the optimal predictions for the targeted aggregation set. 
\begin{assumption}\label{assump:seperation}
For all $\I$ which are strict non-super-models of $I^\star$, for some $\delta > 0$:
\[\E\left(\left[Y_i-\E(Y_i \mid P_{\I}(\tilde X_i)\right]^2\right) > \E\left(\left[Y_i-\E(Y_i \mid P_{\I^\star}(\tilde  X_i)\right]^2\right) + \delta. \] 
\end{assumption}

Finally, we define a restricted parameter space in order to derive a uniform law of large numbers that ensures that there are no false positives in \Cref{thm:selectionConsistency}. For some $C_g > 0$, let
\begin{equation}
    \begin{aligned}
        G_{n}  &= \left\{\bm\gamma \, :\, \vert \bm\gamma\vert_\infty \leq C_g n^{2/(2\beta+\vert \gamma \vert_0)}\right\}\\
        \check G_{n}  &= \left\{\g \in G_n : \text{ such that } \I = \{v \,:\,\gamma_v > 0\} \text{ is a strict non-super-model of } \I^\star \right\}.
    \end{aligned}
\end{equation}
Because $\g_v$ is equivalent to the squared bandwidth in the typical parameterization, $G_n$ contains the optimal (population level) bandwidths for any $\I\subset \mathcal{T}$ which are of the order $n^{2/(2\beta+\vert \I \vert)}$. In practice, we do not explicitly enforce this constraint, but instead implicitly enforce this by using gradient-based methods initialized with small starting values. This seems to work well empirically.

\begin{restatable}{theorem}{modelSelect}
\label{thm:selectionConsistency}
Suppose Assumptions~\ref{assump:bounded}, \ref{assump:contFunc}, \ref{assump:tuning}, and \ref{assump:seperation} hold. Let $\hat \I = \{v : \hat \g_v > 0\}$ where $\hat \g = \arg \min_{\g \in G_{n}} L_n(\g; \tilX, \Y) + \lambda_n\sum_{v}\hat w_v\g_v$. Then,
\begin{equation}
P\left(\hat \I = \Istar \right) \rightarrow  1.
\end{equation}
\end{restatable}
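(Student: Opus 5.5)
We split the event $\{\hat\I \neq \Istar\}$ into false positives ($\hat\I \not\subseteq \Istar$) and false negatives ($\Istar \not\subseteq \hat\I$) and bound the probability of each.

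\textbf{False positives.} The constrained global minimizer $\hat\g$ over $G_n$ is a stationary point of \Cref{eq:kr_texas_opt}, or a KKT point if the box constraint binds. The argument of \Cref{lem:noFalsePos} only uses that, for $k\notin\Istar$, the subgradient condition $\partial_{\gamma_k} L_n(\g) + \lambda_n \hat w_k \geq 0$ with equality whenever $\gamma_k>0$ cannot hold. This argument is unaffected by an upper box constraint on coordinates that are zero or interior. So \Cref{lem:noFalsePos}, possibly restated for KKT points on $G_n$, gives $P(\hat\I\subseteq\Istar)\to 1$. The mechanism is that $|\partial_{\gamma_k}L_n|$ is bounded on $G_n$ by a polynomial factor controlled via \Cref{assump:bounded}. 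Meanwhile $\lambda_n\hat w_k\gtrsim n^{\min(a_1,a_2)b-d}\to\infty$ for Categories 1--3 by \Cref{assump:tuning}, so the penalty gradient dominates.

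\textbf{False negatives.} On the event $\hat\I\subseteq\Istar$, suppose $\hat\I\subsetneq\Istar$. The members of $\Istar$ have disjoint leaf sets, since condition (iii) together with $\de(v)\cap\Istar=\emptyset$ rules out nested nodes. Hence the rows of $A_{\Istar}$ are linearly independent with disjoint supports, and any proper subset $\I$ is a strict non-super-model. Thus $\hat\g\in\check G_n$.

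We then compare objective values against an oracle $\g^\star$ supported on $\Istar$ with entries of order $n^{2/(2\beta+|\Istar|)}$, which lies in $G_n$ for suitable $C_g$. The oracle satisfies three bounds:
\begin{enumerate}[label=(\alph*)]
\item its loss obeys $L_n(\g^\star)\to_p \E[(Y-\E(Y\mid P_{\Istar}\tilde X))^2]=\sigma^2$ by standard Nadaraya--Watson consistency under \Cref{assump:contFunc,assump:bounded};
\item its penalty is $\lambda_n\sum_{v\in\Istar}\hat w_v\gamma^\star_v\lesssim n^{(a_2-a_1)b-d}\,n^{2/(2\beta+|\Istar|)}$, which must vanish;
\item it is attained where $C_{v,2},C_{v,3}$ are bounded below for Category 4.
\end{enumerate}
Bound (b) may require strengthening the tuning condition, or choosing $\g^\star$ growing slowly, for example $\log n$, which still gives consistency. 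We take the latter choice so that \Cref{assump:tuning} suffices.

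For the competitor we need a uniform lower bound
\[
\inf_{\g\in\check G_n} L_n(\g)\ \geq\ \sigma^2+\delta-o_p(1).
\]
This is the main obstacle. The approach is a uniform law of large numbers over $\check G_n$ for the leave-one-out criterion. Write $L_n(\g)=\frac1n\sum_i(Y_i-\hat m_{-i})^2$, whose mean is at least the population risk of the best predictor measurable in $P_{\I}(\tilde X)$ for $\I=\{v:\gamma_v>0\}$. By \Cref{assump:seperation} this risk is $\geq\sigma^2+\delta$. We control the deviation uniformly as follows:
\begin{enumerate}[label=(\roman*)]
\item partition $\check G_n$ by its finitely many supports $\I$;
\item within each support, use a covering of the box $G_n$ in $\log\g$ coordinates, of polynomial cardinality;
\item use the density lower bound in \Cref{assump:bounded} to keep denominators bounded below by $c_f\prod\gamma_v^{-1/2}$, which gives Lipschitz control of $\hat m_{-i}$ in $\g$ and bounded-difference concentration, because the effective sample size $n\prod\gamma_v^{-1/2}\to\infty$ exactly under the $G_n$ restriction;
\item finish with a union bound.
\end{enumerate}
The delicate point is that $\hat m_{-i}$ is not independent across $i$. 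We handle this with U-statistic-style decoupling or McDiarmid's inequality, since each observation changes $L_n$ by $O(1/(n\prod\gamma_v^{-1/2}))$.

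Combining the pieces, $\hat\g$ has objective at least $\sigma^2+\delta-o_p(1)$ while $\g^\star$ has objective $\sigma^2+o_p(1)$. This contradicts optimality with probability tending to one, so $P(\Istar\subseteq\hat\I)\to1$, and together with the false-positive bound, $P(\hat\I=\Istar)\to1$.
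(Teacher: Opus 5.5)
\textbf{Overall structure.} Your architecture matches the paper's. You get no false positives from the KKT argument of Lemma~\ref{lem:noFalsePos}. You use an oracle point supported on $\Istar$ with slowly growing entries and vanishing penalty (your $\log n$ in place of the paper's $n^r$). You lower-bound strict non-super-models by $\sigma^2+\delta-o_p(1)$ using the conditional-mean inequality plus a uniform law of large numbers split by support. Your remark that proper subsets of $\Istar$ are strict non-super-models, because members of $\Istar$ have disjoint leaf sets, makes explicit a step the paper leaves implicit.

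\textbf{The gap: concentration inside the uniform law.} This step fails as you wrote it, for two reasons.
\begin{itemize}
\item McDiarmid requires worst-case bounded differences. Your $c_j=O\bigl(1/(n\prod_v\gamma_v^{-1/2})\bigr)$ rests on $D_i\gtrsim n\prod_v\gamma_v^{-1/2}$, which holds only with high probability.
\item More seriously, even granting it, $\sum_j c_j^2\asymp 1/(n\prod_v\gamma_v^{-1})$. Near the edge of $G_n$, where $\gamma_v\asymp n^{2/(2\beta+s)}$, this equals $n^{(s-2\beta)/(2\beta+s)}$. That does not vanish once $s=|\g|_0\geq 2\beta$, and nothing in the assumptions (only $\beta>1$) excludes this. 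The tail bound is then not even $o(1)$, let alone small enough to survive a union bound over a polynomial-size net.
\end{itemize}
A bounded-difference argument with your $c_j$ needs $n\prod_v\gamma_v^{-1}\to\infty$. That is strictly stronger than the condition $n\prod_v\gamma_v^{-1/2}\to\infty$ that you invoke and that $G_n$ actually guarantees.

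\textbf{How the paper avoids this.} The paper's route works for every support size because it never treats $L_n$ as a function of $n$ independent inputs.
\begin{itemize}
\item Lemma~\ref{lem:loocv_concentration} linearizes $L_i=(Y_i-N_i/D_i)^2$ around the conditional means $(\nu_i,\mu_i)=(n-1)\bigl(\psi(X_i),\phi(X_i)\bigr)$.
\item Given $X_i$, $D_i$ and $N_i$ are sums of i.i.d.\ bounded terms with variance $O(\mu_i)$, where $\mu_i\geq c_f(n-1)\prod_v\gamma_v^{-1/2}$. Bernstein (Lemma~\ref{lem:goodEventWHP}) then gives relative error $\epsilon$ simultaneously for all $i$ as soon as $n\prod_v\gamma_v^{-1/2}\gg\epsilon^{-2}\log n$, which $G_n$ ensures.
\item On that event $L_n=\frac1n\sum_i\bigl(Y_i-\psi(X_i)/\phi(X_i)\bigr)^2+O(\epsilon)$. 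The leading term is an average of i.i.d.\ bounded variables and concentrates by Hoeffding, so the dependence among the $\hat m_{-i}$ disappears.
\end{itemize}
Pair this with the constant gradient bound of Lemma~\ref{lem:boundedDerivative}, which gives Lipschitz control directly in $\g$. Your $\log\g$ net would additionally need truncation, since that domain is unbounded below. With these, the rest of your plan goes through.

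\textbf{False-positive step.} The same constant bound is what this step needs. A bound on $|\partial_{\gamma_k}L_n|$ that grows polynomially in $n$ would not be dominated by $\lambda_n\hat w_k\gtrsim n^{\min(a_1,a_2)b-d}$ under Assumption~\ref{assump:tuning}.
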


%Directly applying \Cref{lem:unifControl} does not imply that the selected $\hat \g$ achieves the oracle rate of $n^{2\beta(2\beta + \vert I^\star \vert}$. However, in contrast to previous results which show this..., \Cref{}  

Since \Cref{thm:selectionConsistency} implies consistency in model selection, an estimator with the the oracle rate could be formed by splitting the sample and using the first half of the data to estimate $\hat \I$. Conditional on $\hat \I = \Istar$, which occurs with probability $1-o(1)$, estimating $\hat m_{\hat \I}$ using the second half of the data would achieve the oracle rate of $n^{-2\beta/(2\beta+\vert \Istar \vert)}$. %Of course, sample splitting comes at a cost in finite samples. 
However, sample splitting may come at a large cost in finite samples, and in \Cref{sec:numerics}, we see that our method (without sample splitting) is still empirically comparable to oracle methods.  

% The rate in~\Cref{cor:optRate} is optimal up to log factors for nonparametric regression with $\vert \Istar \vert$ covariates when $m$ is only assumed to be Lipschitz~\citep{stone1982optimal}. 
% However, we have assumed higher levels of smoothness in order to estimate the derivatives. 

\section{Numerical experiments}\label{sec:numerics}
\subsection{Simulation Design}

%To assess the performance of our estimator, we first specify a simulation framework that includes the construction of a tree, the selection of active nodes, the generation of covariates $\bm X$, and the mechanism by which the outcome $Y$ is derived from $\bm X.$

For simulations, we first generate covariates with $p = 128$ using a scaled Gaussian copula. Specifically, we draw $Z_i \sim \mathcal{N}\left(0, \Sigma \right)$, then transform the data via $X_i = 2\Phi( Z_i)-1$, where $\Phi(\cdot)$ is the standard normal CDF applied element-wise. We set $\Sigma$ to be either (1) the identity, (2) a Toeplitz matrix with $\Sigma_{ij} = 0.4^{|i-j|}$, or (3) a tridiagonal matrix with $\Sigma_{ij} = 1_{\{i = j\}} + .4 \times 1_{\{i \neq j\}}$. The covariates are structured according to a full binary tree $\mathcal{T}$ and so $T = 2p-1 = 255$.
$Y_i$ is constructed from five groups of covariates, each corresponding to a different node at varying depths in $\mathcal{T}$. Let $S_{i,1}= \sum_{j=1}^{4} \tilde X_{ij},\,
S_{i2} = \sum_{j = 33}^{34} \tilde X_{ij},\,
S_{i3} = \sum_{j = 65}^{69} \tilde X_{ij},\,
S_{i4} = \tilde X_{i,97},\,
S_{i5}= \tilde X_{i,98}$.
We consider three different settings for generating $Y_i$:
\begin{itemize}
    \item Nonlinear 1\,: $Y_i
= S_{i1}^2
+ 5\cos(S_{i2})
- S_{i4}
+ 10\left(1 + S_{i3}^2\right)^{-1}\, S_{i5}^2 + \varepsilon_i$
 \item Nonlinear 2\,: $Y_i
= S_{i1}^3
+ 5\sin(S_{i2})
- 2S_{i3}^2
- S_{i4}
+ 0.5S_{i5}^5
+ \varepsilon_i$
    \item Linear\,\,\,\,\,\,\,\,\,\,\,\,\,\,\,: $Y_i = 2S_{i1} +  5S_{i2} + S_{i3} - 3S_{i4} - 2S_{i5} + \varepsilon_i$
\end{itemize}
where $\varepsilon_i\sim N(0,(0.01\times \sigma_Y)^2)$, and $\sigma_Y$ denotes the standard deviation of the noiseless responses. We consider the sample sizes of $n =500, 1000, 2500, 5000$ and use 200 replicates for each setting of parameters and sample size. These sample sizes reflect those of large microbiome cohort studies such as the Human Microbiome Project \citep{sa2012framework}, the American Gut Project \citep{mcdonald2018american}, and LifeLines-DEEP \citep{zhernakova2016population}.

We compare the performance of KR-TEXAS to an ``oracle'' version in which Nadaraya-Watson with a cross-validated anisotropic bandwidth is applied to the ``true'' aggregated features $\{S_{i,v}\}_{v=1}^5$ (KR-TEXAS Oracle), Nadaraya-Watson on the ``true'' aggregated features $\{S_{i,v}\}_{v=1}^5$ with a fixed isotropic MSE optimal bandwidth of $n^{-1/(4 + \vert \Istar \vert)}$ (NW Oracle), Nadaraya-Watson on the original features $\X$ with an anisotropic bandwidth selected by cross-validation (NW+ML), Rare feature aggregation (RARE)~\citep{bien2021rare}, the lasso on $\tilX$ (LASSO AX)~\citep{tibshirani1996regression}, the lasso on $\X$ (LASSO), Nadaraya-Watson on $\tilX$ with an isotropic bandwidth selected by cross-validation  (NW AX), and Nadaraya-Watson on $\X$ with an isotropic bandwidth selected by cross-validation (NW).

To evaluate the prediction performance, we calculate the root mean squared prediction error (RMSE) \[\sqrt{(1/n_\text{test})\sum_{i=1}^{n_\text{test}}(\E(Y\mid X_i) - \hat m(X_i))^2}\] on a fixed test set of size $n_{\text{test}} = 1,000$. For methods that select a specific model (KR-TEXAS, RARE, LASSO AX), we also evaluate variable selection using the sensitivity (SN), specificity (SP), precision (Prec), and negative predictive value (NPV). Additional details for the simulation settings are in Supplementary Material \ref{app:NumericalExpDetails}.

\subsection{Results}

\subsubsection{Prediction Performance}
\Cref{fig:rmse_main} presents the prediction performance of KR-TEXAS Oracle, KR-TEXAS, NW Oracle, NW + ML, RARE, LASSO AX, and NW across the three regression function settings for identity and tridiagonal covariance. \Cref{fig:rmse_all} in Supplementary Material \ref{app:NumericalExpDetails} contains the prediction performance results for all methods and covariance settings. We show a box-plot for the 200 replicates and use a white diamond to indicate the median.
%and each point represents a replicate with RMSE beyond 1.5 units times the interquartile range.
The standard deviation of $Y$ for each setting is marked with a dashed line; this would correspond to the loss of simply predicting $\bar Y$ for each point.

\begin{figure}[t]
    \centering
\includegraphics[width=\linewidth]{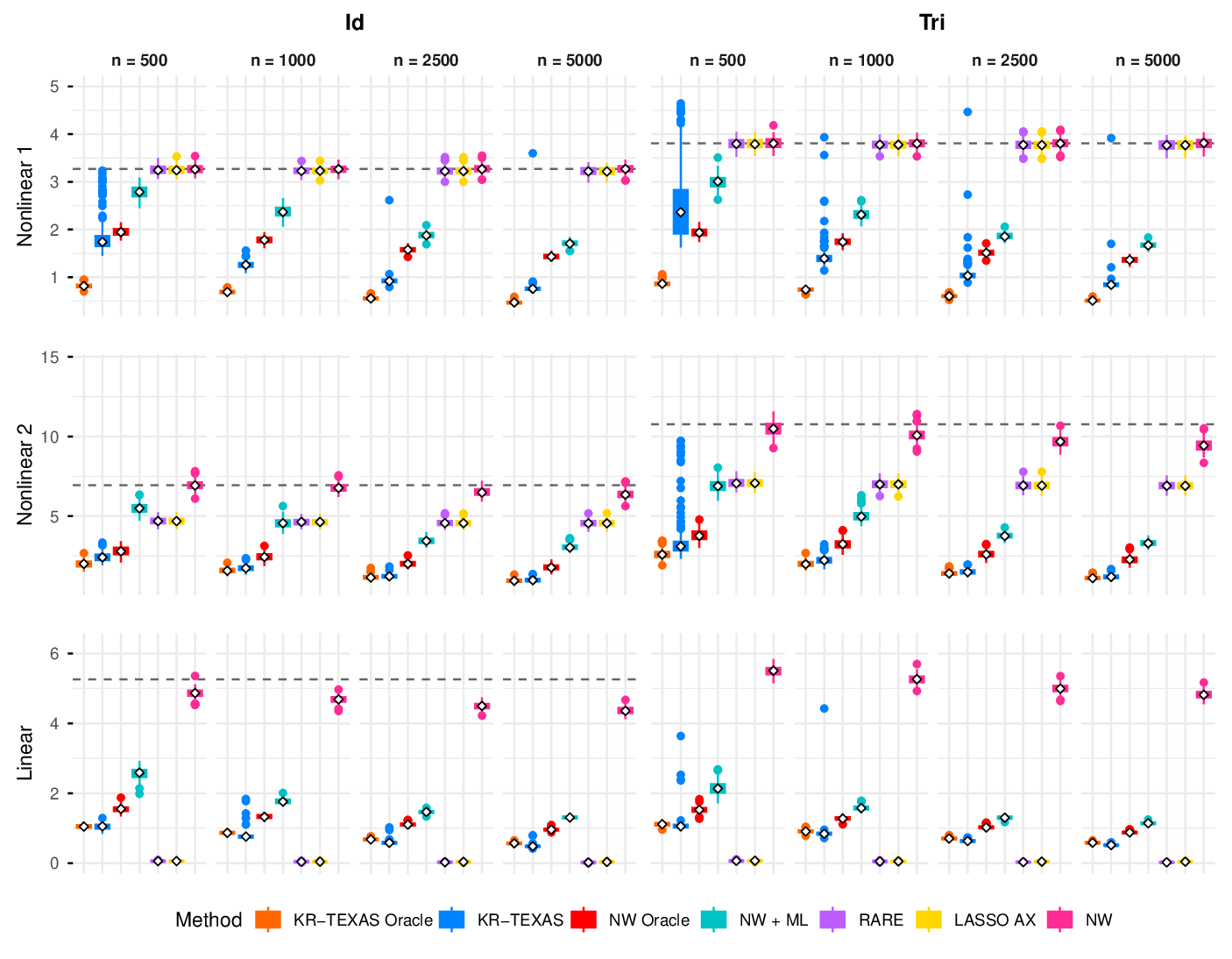}
    \caption{Prediction performance (RMSE) of all methods with identity and tridiagonal covariance.}
    \label{fig:rmse_main}
\end{figure}

In both nonlinear settings, across all sample sizes and covariance frameworks, KR-TEXAS and KR-TEXAS Oracle consistently achieve the lowest median RMSE.
 The Nadaraya-Watson estimators with oracle (NW Oracle) and anisotropic bandwidths (NW + ML) perform competitively but have larger predictive errors than the two KR-TEXAS methods; this shows the benefit of combining feature aggregation and metric learning. In the first nonlinear setting, where the functions are mostly even functions and contain an interaction term, the linear methods (RARE and LASSO) and isotropic bandwidth Nadaraya-Watson on $\bm X$ (NW) do not do much better than simply predicting $\bar Y$. However, in the second nonlinear setting, which contains additive, mostly odd functions, the linear methods are able to capture the signal and improve over predicting the mean. In settings with sample sizes of $500$ and $1000$, KR-TEXAS exhibits higher variability than the other methods, perhaps due to the non-convex optimization landscape. %As sample size increases, we observe that the predictive gap between the two groups \{KR-TEXAS Oracle, KR-TEXAS, NW Oracle, NW + ML\} and \{RARE, LASSO AX, NW\} widens.
Unsurprisingly, when the linear model is correct, the linear methods (RARE and LASSO-AX) outperform the nonparametric approaches in terms of prediction error. Notably, RARE achieves the lowest RMSE values across every sample size and covariance setting. However, in this setting, KR-TEXAS still outperforms the nonparametric methods which do not aggregate the covariates.

\subsubsection{Variable Selection Performance}

\begin{table}[t]\footnotesize
\caption{Variable selection performance across different settings}
\centering\label{tab:nonlinearTab}
\begin{tabular}{ccc|@{\,}r@{\,\,}r@{\,\,}r@{\,\,}r@{\,}|@{\,}r@{\,\,}r@{\,\,}r@{\,\,}r@{\,}|@{\,}r@{\,\,}r@{\,\,}r@{\,\,}r@{\,}|@{\,}r@{\,\,}r@{\,\,}r@{\,\,}r@{\,}}
& & & \multicolumn{4}{c}{$n=500$} & \multicolumn{4}{c}{$n=1000$} & \multicolumn{4}{c}{$n=2500$} & \multicolumn{4}{c}{$n=5000$} \\
& & & SN & SP & Prec & NPV & SN & SP & Prec & NPV & SN & SP & Prec & NPV & SN & SP & Prec & NPV \\
\midrule
 \multirow{6}{*}{\rotatebox{90}{Nonlinear 1}}& \multirow{3}{*}{Id} & KRT & \textbf{1.00} & .35 & .03 & \textbf{1.00} & \textbf{1.00} & .61 & .05 & \textbf{1.00} & \textbf{1.00} & .72 & .07 & \textbf{1.00} & \textbf{1.00} & .87 & .13 & \textbf{1.00} \\
\cline{3-19}
& & LASSO & .19 & \textbf{.98} & \textbf{.15} & .98 & .22 & \textbf{.98} & \textbf{.20} & .98 & .23 & \textbf{.98} & \textbf{.18} & .98 & .23 & \textbf{.98} & \textbf{.17} & .98 \\
\cline{3-19}
& & RARE & .17 & .97 & .11 & .98 & .22 & .97 & .14 & .98 & .22 & .97 & .14 & .98 & .22 & .97 & .12 & .98 \\
\cline{2-19}
& \multirow{3}{*}{Tri} & KRT & \textbf{.85} & .56 & .04 & \textbf{.99} & \textbf{1.00} & .63 & .05 & \textbf{1.00} & \textbf{1.00} & .66 & .06 & \textbf{1.00} & \textbf{1.00} & .93 & \textbf{.22} & \textbf{1.00} \\
\cline{3-19}
& & LASSO & .15 & \textbf{.98} & \textbf{.14} & .98 & .21 & \textbf{.98} & \textbf{.18} & .98 & .23 & \textbf{.98} & \textbf{.17} & .98 & .22 & \textbf{.98} & .17 & .98 \\
\cline{3-19}
& & RARE & .14 & .97 & .09 & .98 & .20 & \textbf{.98} & .15 & .98 & .22 & .97 & .13 & .98 & .22 & \textbf{.98} & .17 & .98 \\
\midrule
\multirow{6}{*}{\rotatebox{90}{Nonlinear 2}}
&\multirow{3}{*}{Id} & KRT & \textbf{.95} & .40 & .03 & \textbf{1.00} & \textbf{.92} & .53 & .04 & \textbf{1.00} & \textbf{.96} & .54 & .04 & \textbf{1.00} & \textbf{.98} & .55 & .04 & \textbf{1.00} \\
\cline{3-19}
& & LASSO & .59 & \textbf{.96} & \textbf{.23} & .99 & .65 & \textbf{.96} & \textbf{.23} & .99 & .69 & \textbf{.96} & \textbf{.25} & .99 & .74 & \textbf{.96} & \textbf{.26} & .99 \\
\cline{3-19}
& & RARE & .56 & \textbf{.96} & .21 & .99 & .65 & .94 & .18 & .99 & .69 & .94 & .19 & .99 & .75 & .94 & .21 & .99 \\
\cline{2-19}
& \multirow{3}{*}{Tri} & KRT & \textbf{.87} & .50 & .03 & \textbf{.99} & \textbf{.88} & .57 & .04 & \textbf{1.00} & \textbf{.86} & .63 & .04 & \textbf{1.00} & \textbf{.86} & .64 & .05 & \textbf{1.00} \\
\cline{3-19}
& & LASSO & .50 & \textbf{.96} & \textbf{.22} & \textbf{.99} & .55 & \textbf{.96} & \textbf{.22} & .99 & .63 & \textbf{.95} & \textbf{.20} & .99 & .66 & \textbf{.95} & \textbf{.19} & .99 \\
\cline{3-19}
& & RARE & .48 & \textbf{.96} & .18 & \textbf{.99} & .51 & .95 & .18 & .99 & .60 & .94 & .17 & .99 & .65 & .91 & .12 & .99 \\
\midrule
\multirow{6}{*}{\rotatebox{90}{Linear}} &
\multirow{3}{*}{Id} & KRT & \textbf{.81} & .60 & .04 & .99 & \textbf{.80} & .85 & .10 & \textbf{1.00} & \textbf{.80} & .70 & .05 & .99 & \textbf{.80} & .64 & .04 & .99 \\
\cline{3-19}
& & LASSO & .80 & \textbf{.94} & \textbf{.21} & \textbf{1.00} & \textbf{.80} & \textbf{.94} & \textbf{.20} & \textbf{1.00} & \textbf{.80} & \textbf{.99} & \textbf{.56} & \textbf{1.00} & \textbf{.80} & \textbf{.99} & \textbf{.75} & \textbf{1.00} \\
\cline{3-19}
& & RARE & .80 & .91 & .15 & \textbf{1.00} & \textbf{.80} & \textbf{.94} & \textbf{.20} & \textbf{1.00} & \textbf{.80} & .91 & .15 & \textbf{1.00} & \textbf{.80} & .94 & .22 & \textbf{1.00} \\
\cline{2-19}
& \multirow{3}{*}{Tri} & KRT & \textbf{.83} & .71 & .05 & \textbf{1.00} & \textbf{.81} & .81 & .08 & \textbf{1.00} & \textbf{.82} & .58 & .04 & .99 & \textbf{.81} & .53 & .03 & .99 \\
\cline{3-19}
& & LASSO & .80 & \textbf{.94} & \textbf{.21} & \textbf{1.00} & .80 & \textbf{.94} & \textbf{.20} & \textbf{1.00} & .80 & \textbf{.99} & \textbf{.55} & \textbf{1.00} & .80 & \textbf{.99} & \textbf{.71} & \textbf{1.00} \\
\cline{3-19}
& & RARE & .80 & .92 & .17 & \textbf{1.00} & .80 & \textbf{.94} & \textbf{.20} & \textbf{1.00} & .80 & .93 & .19 & \textbf{1.00} & .80 & .94 & .20 & \textbf{1.00} \\
\hline
\end{tabular}
\end{table} 

The variable selection results in \Cref{tab:nonlinearTab} show that our method generally outperforms the linear methods in both nonlinear settings, even when the sample size is small. 
Notably, KR-TEXAS achieves the highest sensitivity and negative predictive value across all values of $n$ in both covariance settings. Although the linear methods have better precision and specificity at smaller sample sizes, KR-TEXAS achieves comparable performance as the sample size increases (without sacrificing sensitivity or negative predictive value). In contrast, both linear methods have consistently low sensitivity, especially in the Nonlinear 1 case. %while maintaining exceptionally high specificity. %Overall, in the nonlinear settings, KR-TEXAS has strong variable selection performance when compared to RARE and the lasso.
When the true model is linear, RARE and LASSO-AX generally outperform KR-TEXAS in support recovery as expected. LASSO-AX and RARE have exceptionally high specificity, especially when $n \geq 2500$, and unlike the first nonlinear case, it does not come at the cost of sensitivity. In particular, while, KR-TEXAS achieves (or ties with) the highest sensitivity across all sample sizes, LASSO-AX achieves the best specificity, precision, and negative predictive value (with RARE only slightly lacking in precision). 

\Cref{tab:allCasesTab} in Supplementary Material \ref{app:NumericalExpDetails} contains the full variable selection performance results for all covariance settings.

\section{Data Analysis}\label{sec:analysis}
We now apply our procedure to a microbiome dataset. Although our proposed procedure does not specifically address the compositional nature of the microbiome covariates, we see that interesting structure is recovered nevertheless.

Acetic, propionic and butyric acid are short chain fatty acids (SCFAs) derived from intestinal microbial fermentation of indigestible foods in the human gut.
There is growing evidence that SCFAs are crucial for disease development and health maintenance \citep{sellin1998short, tan2014role}. 
%and are known to affect intestinal motility, barrier function, and host metabolism \citep{canfora2015short, zhang2023short}. 
We utilize data from the \texttt{curatedMetagenomicData} \texttt{R}  package \citep{cmd2017} repository to obtain species-level relative abundance profiles and HUMAnN3 \citep{humann3} pathway abundance data from healthy control stool samples. HUMAnN3 is a functional profiling pipeline that maps metagenomic reads to gene families and reconstructs sample-specific metabolic pathway abundances. For participants with repeated observations, only the earliest visit was retained to avoid longitudinal dependence. We include any study with $>1,000$ qualifying samples to ensure adequate sample size and consistency across cohorts; this results in the AsnicarF 2021 and LifeLinesDeep 2016 cohorts \citep{asnicar2021microbiome, zhernakova2016population}.

We construct short-chain fatty acid (SCFA) scores directly from functional (HUMAnN3) pathway abundance profiles provided by the \texttt{curatedMetagenomicData} \texttt{R} package. For each sample, we extracted pathways whose names contained SCFA-related substrings (“butyr”, “butanoate”, “propionat”, “propanoate”, or “acetat”), thereby capturing butyrate-, propionate-, and acetate-producing pathways. 
The abundances of these selected pathways are then summed for each sample, after adding a pseudocount of $10^{-6}$ to avoid zeros, to obtain an aggregate SCFA pathway abundance. This aggregate is log-transformed and finally standardized across samples. The resulting standardized log-sum defines the SCFA score used as the response in our analysis.

For the observed covariates, we select the top $50$ most prevalent species by relative abundance. To construct the taxonomic tree, we take the union of the full taxonomic lineage (kingdom through species) for each of the $50$ species. If $v$ is an internal node with one child, we remove it and add an edge from the parent of $v$ to the child of $v$. This results in a tree with $85$ total nodes that include the observed covariates.

 %Using this set, we build the tree matrix $\bm A$ as defined in Section \ref{sec:methods}. %To the best of our knowledge, although these cohorts are widely used benchmark datasets in human gut metagenomics, our study is the first to apply an adaptive tree‑aggregation framework to model these data. 
When applied to the data, our procedure selects 40 (possibly aggregated) covariates. As a rough measure of feature importance, we use $\hat \gamma_v \times  \var(\tilde X_{i,v})$ and list the top 20 features in~\Cref{tab:scfa}. A table containing all 41 selected covariates is provided in Supplementary Material \ref{app:dataAnalysis}.
Many taxa identified by KR-TEXAS align with well‑established findings in the microbiome literature in relation to SCFAs; others are comparatively understudied, perhaps suggesting promising avenues for new biological discovery. Additional details are provided in Supplementary Material \ref{app:dataAnalysis}.

The gut SCFA production ecosystem is a tightly coupled metabolic network organized as a stepwise pipeline \citep{flint2012role}. Our estimator assigns the strong predictive value of SCFAs to the abundances of the phyla Bacteroidetes and Firmicutes, which have been shown to contribute to colonic SCFA production in human fermentation and microbiome studies \citep{riviere2016bifidobacteria,fu2019nondigestible}.  Complex dietary fibers are first degraded by taxa within Bacteroidetes, which break down simple sugars and oligosaccharides \citep{flint2012role, salonen2014impact}. These substrates are then fermented by intermediate organisms, largely within Firmicutes such as \textit{Dorea} and \textit{Fusicatenibacter}, producing acetate and other metabolic intermediates \citep{duncan2007reduced}. Previous studies have found an association between the \textit{Fusicatenibacter} genus and SCFAs~\citep{medawar2021gut, bartsch2025microbiota}; our analysis suggests that this association may be driven by finer resolution of the species level by \textit{Fusicatenibacter saccharivorans}. Finally, specialized bacteria producing butyrates, including \textit{Anaerostipes hadrus}, \textit{Eubacterium rectale}, and \textit{Roseburia}, convert these intermediates into butyrate, a key molecule for gut barrier function and immune regulation \citep{liu2024anaerostipeshadrus}. Actinobacteria phylum is known to be key in the development and maintenance of intestinal homeostasis and contributes to the function of the intestinal barrier through the production of acetate and lactate \citep{binda2018actinobacteria}.  Together, this system operates as an interdependent food web where no single organism completes the pathway, but instead SCFA production emerges from coordinated action across multiple taxa.  Our selection of across phylum to species highlights our procedure's ability to identify informative features at varying levels of resolution.

\begin{table}[H]
\centering
\footnotesize
\caption{Top 20 taxa ranked by learned $\hat\gamma$ coefficients.}
\centering
\begin{singlespace}
\begin{tabular}[t]{r|llr}
\toprule
Rank & Taxon & Level & $\hat \gamma_v \times  \var(\tilde X_{i,v})$\\
\midrule
1 & Bacteroidetes & Phylum & 11.87\\
2 & \textit{Anaerostipes hadrus} & Species & 11.86\\
3 & Bacteroidales & Order & 10.80\\
4 & \textit{Dorea} & Genus & 6.44\\
5 & Firmicutes & Phylum & 5.43\\
6 & Bacteroidia & Class & 3.64\\
7 & Oscillospiraceae & Family & 3.08\\
8 & Coriobacteriia & Class & 2.68\\
9 & \textit{Oscillibacter} & Genus & 1.76\\
10 & \textit{Fusicatenibacter saccharivorans} & Species & 1.72\\
11 & \textit{Agathobaculum butyriciproducens} & Species & 1.66\\
12 & \textit{Parabacteroides} & Genus & 1.65\\
13 & Tannerellaceae & Family & 1.54\\
14 & \textit{Eubacterium rectale} & Species & 0.86\\
15 & \textit{Roseburia intestinalis} & Species & 0.85\\
16 & \textit{Roseburia} & Genus & 0.80\\
17 & Actinobacteria & Phylum & 0.75\\
18 & \textit{Parabacteroides merdae} & Species & 0.25\\
19 & Actinobacteria & Class & 0.20\\
20 & \textit{Oscillibacter sp\_57\_20} & Species & 0.18\\
\bottomrule
\end{tabular}
\end{singlespace}
\label{tab:scfa}
\end{table}

As discussed previously, our procedure does not explicitly enforce the constraint that $v \in \hat \I$ implies that $\de(v) \not \in \hat \I$, and in finite samples, selected taxa may correspond to nested taxonomic levels. This happens for a handful of the selected taxa in our analysis; e.g., \textit{Roseburia} and \textit{Roseburia intestinalis}. The simultaneous appearance of multiple levels within the same lineage may point to differential associations which could be scientifically interesting.  
However, depending on the scientific focus, one could also post-process the selected model by collapsing selected features to the highest or lowest selected taxonomic node.

%The species \textit{Dorea longicatena} is a relatively new taxon within the human gut microbiome, first described in 2002 \citep{taras2002reclassification}. Although information linking this specific species to short-chain fatty acids is sparse, several \textit{Dorea} species are recognized producers of SCFAs (including acetate and occasionally butyrate) through carbohydrate fermentation, suggesting that it may also participate in similar metabolic pathways \citep{shin2025microbiota, song2025altered}. This, in turn, helps justify the similar $\hat{\gamma}$ value assigned to its genus, \textit{Dorea}, in \Cref{tab:scfa}.

%, because biological associations may manifest at different hierarchical depths, both fine‑scale (species) and coarse‑scale (family, order, class) features can differentially retain non‑zero importance in finite samples.  

%This optionality is offered in our \texttt{R} package. Notably, this flexibility is biologically well‑motivated; for example, increased \textit{Bifidobacterium} genus abundance (rather than individual species abundance) is well-known to correlate positively with plasma SCFA levels \citep{cui2024bifidobacterium, kalnina2023variations}. Overall, these results underscore the ability of KR-TEXAS to identify multi‑level taxonomic signals underlying SCFA metabolism, revealing both known associations and previously understudied lineages of potential biological relevance.

\section{Discussion}\label{sec:discussion}
We have proposed a method for nonparametric regression when the covariates may be aggregated at varying levels of granularity according to a known hierarchical structure. By using coordinate specific adaptive weights, our method consistently selects a target aggregation set which is of intrinsic interest in scientific problems. In addition, the model over the aggregated variables may be sparser than a model which only considers the original features; thus, selecting this target aggregation set also yields substantial statistical efficiency in the nonparametric setting. In simulations, we see that the proposed procedure outperforms existing methods in both prediction accuracy and model selection performance. Finally, we demonstrate the utility of our method by predicting short chain fatty acids with measurements of the microbiome which are hierarchically structured in a taxonomic tree. 

Future work may extend our method to the high-dimensional setting where $p$ is much larger than $n$. In addition, investigating control of the false discovery rate or false splitting rate~\citep{shao2025controlling} would be fruitful. Finally, future work may explore other penalties such as the exclusive lasso~\citep{campbell2017within} which more directly enforce the desired tree structure.

\section*{Acknowledgments}
The authors gratefully acknowledge Saurabh Mehta, Samantha Huey, and David Ruppert for their support and scientific expertise in gut microbiome analysis and nutrition. 
This work was supported by the National Institutes of Health under grant number T32HD113301: Artificial Intelligence and Precision Nutrition Training Program, Cornell University.

\section*{Data Availability Statement}
The method is implemented as the function \texttt{krtexas\_fit()} in the \texttt{krtexas} package which can be found at \url{https://github.com/sithijamanage/krtexas}. The data used in \Cref{sec:analysis} is publicly available in the \texttt{curatedMetagenomicData} \texttt{R} package.

\section*{Disclosure Statement}

The authors report there are no competing interests to declare.

\clearpage % 
\bibliographystyle{alpha}
\bibliography{compositional}

\clearpage % ensures Appendix starts on page after Bibliography
 \clearpage
\appendix
\crefalias{section}{appendix}
\crefalias{subsection}{appendix}

\part*{Appendix} % container above \section in article
\addcontentsline{toc}{part}{Appendix} % optional: show in main ToC

\begingroup
\etocsettocstyle{%
  \section*{Table of Contents}%
  \vspace{-0.25em}%
  \noindent\rule{\linewidth}{0.4pt}\par
  \vspace{0.75em}%
}{}
\localtableofcontents
\endgroup
\clearpage

\section{Conditional Bias and Variance for Epanechnikov Kernel Formulation}
\label{app:BiasVar}

In this section, we derive the conditional bias and variance of our estimator when using the Epanechnikov kernel, for fixed $\bm \gamma$.
The Epanechnikov kernel on aggregated features with inverse bandwidth parameters $\bm \gamma$ is 
\begin{equation}
\begin{aligned}
    K_{\g}(\tilde X_i,\tilde X_j) &= \left(1 - \sum_{v=1}^T \gamma_v (\tilde X_{i,v} - \tilde X_{j,v})^2 \right) \mathbbm{1}\left\{\sum_{v=1}^T \gamma_v (\tilde X_{i,v} - \tilde X_{j,v})^2 \leq 1\right\}.
\end{aligned}
\end{equation}

For ease of notation, we define $\bm A_\gamma = (\sqrt{\gamma}\mathbf{1}^\top)\odot \bm A$.

\begin{lemma}[Conditional Bias \& Variance of KR-TEXAS]
\label{lem:distFreeCondBiasVar}
\hfill

Assume (see \citealp{ruppert1994multivariate}):
\begin{itemize}
\label{ass:RW}
\item[\textbf{\textit{A1}}] 
(Interior point and smoothness)

Let $f_X$ be the common density of the $\mathbb{R}^p-$valued predictor variables.
The evaluation point $X_0$ lies in the interior of $\operatorname{supp}(f_X)$. At $X_0$, the conditional variance
\(\sigma(X_0):=\operatorname{Var}(Y\mid X=X_0)\) is continuous, $f$ is continuously differentiable, and all second-order derivatives of $m$ are continuous. Also, $f_X(X_0)>0$ and $\sigma(X_0)>0.$

\item[\textbf{\textit{A2}}]  (Bandwidth condition)

 As $n\to\infty$, each entry of $(\bm A_\gamma^\top \bm A_\gamma)$ and $n^{-1}\det(\bm A_\gamma^\top \bm A_\gamma)$ tends to zero, and 
the condition number of $(\bm A_\gamma^\top \bm A_\gamma)$ is uniformly bounded. That is to say, there exists a fixed constant $L$ such that the ratio of the largest eigenvalue of $(\bm A_\gamma^\top \bm A_\gamma)$ to the smallest eigenvalue of $(\bm A_\gamma^\top \bm A_\gamma)$ is at most $L$ for all $n$.
\end{itemize}

Then,
\begin{align*}
    \text{Bias} \left\{ \hat{m}_{h}(X_0) \mid X_1, \ldots, X_p \right\}
&\approx \frac{ \nabla m(X_0)^\top (\bm A_\gamma^\top\bm A_\gamma)^{-2}\nabla f_X(X_0) }{ (p+4)f_{X}(X_0) }
+ \frac{\text{tr} \left\{ (\bm A_\gamma^\top\bm A_\gamma)^\top \mathcal{H}_m(X_0) \bm A_\gamma^\top\bm A_\gamma \right\}}{2(p+4)}\\
    \text{Var} \left\{ \hat{m}_{h}(X_0) \mid X_1, \ldots, X_p \right\} 
&\approx  n^{-1}\det(\bm A_\gamma^\top\bm A_\gamma) \left( \frac{ p(p+2)\Gamma(p/2) }{ \pi^{p/2}(p+4) } \right) \frac{ \sigma(X_0) }{ f_{X}(X_0) }.\\
\end{align*}

\end{lemma}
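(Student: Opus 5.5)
The plan is to reduce to the classical multivariate local-constant (Nadaraya--Watson) analysis of \citet{ruppert1994multivariate}, with the anisotropic aggregated metric rewritten as a full bandwidth matrix on the observed features. The key observation is that
\[
\sum_{v}\gamma_v(\tilde X_{i,v}-\tilde X_{j,v})^2=(X_i-X_j)^\top \bm A_\gamma^\top\bm A_\gamma (X_i-X_j),
\]
because $\tilde X=\bm A X$ and $\bm A_\gamma=\operatorname{diag}(\sqrt{\g})\bm A$. Hence $K_{\g}(\tilde X_i,\tilde X_j)=K_E(H^{-1/2}(X_i-X_j))$, where $K_E(u)=(1-u^\top u)\mathbbm{1}\{u^\top u\le 1\}$ is the radially symmetric Epanechnikov kernel on $\mathbb{R}^p$. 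Here I would set $H^{-1}=\bm A_\gamma^\top\bm A_\gamma$. Assumption \textit{A2} then translates directly into the standard conditions on $H$: the entries of $H$ shrink, $n|H|^{1/2}\to\infty$, and the condition number stays bounded. This requires $\bm A_\gamma^\top\bm A_\gamma$ to be invertible, which I take as implicit in \textit{A2}.

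Next I would write the conditional bias. It equals the ratio
\[
\frac{\sum_j K_H(X_j-X_0)\,(m(X_j)-m(X_0))}{\sum_j K_H(X_j-X_0)} .
\]
I would expand the numerator and denominator separately. Taylor expand $m$ to second order and $f_X$ to first order around $X_0$, which is valid by \textit{A1}. Then substitute $u=H^{-1/2}(x-X_0)$ and use the moment identities of $K_E$:
\begin{itemize}
\item $\int K_E=\frac{2\pi^{p/2}}{p(p+2)\Gamma(p/2)}$, after normalizing;
\item $\int uu^\top K_E=\mu_2 I$, with $\mu_2/\mu_0=1/(p+4)$;
\item the odd moments vanish.
\end{itemize}
The gradient cross term $\nabla m^\top H\nabla f/f$ and the Hessian term $\operatorname{tr}(H\mathcal H_m)/2$, each scaled by $1/(p+4)$, then appear. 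I would then use the law of large numbers for the kernel sums, with denominator $\approx n|H|^{1/2}f_X(X_0)\mu_0$, to get the stated approximation. Here I would match the paper's notation by writing $H$ in terms of $(\bm A_\gamma^\top\bm A_\gamma)^{-1}$. The stated form, with powers such as $(\bm A_\gamma^\top\bm A_\gamma)^{-2}$ and the trace expression, should be checked against this substitution, and the notation reconciled accordingly.

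For the variance, I would take
\[
\operatorname{Var}=\frac{\sum_j K_H^2\,\sigma(X_j)}{\left(\sum_j K_H\right)^2}
\approx \frac{\sigma(X_0)\int K_E^2}{n|H|^{1/2}f_X(X_0)\mu_0^2}.
\]
Then $|H|^{-1/2}=\det(\bm A_\gamma^\top\bm A_\gamma)^{1/2}$. The constant follows from computing $\int K_E^2$ and $\mu_0$ by polar coordinates, using $\int_{\|u\|\le1}(1-\|u\|^2)^k\,du=\pi^{p/2}\,k!/\Gamma(p/2+k+1)$. This gives the ratio $p(p+2)\Gamma(p/2)/(\pi^{p/2}(p+4))$ up to the normalization convention.

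The main obstacle is bookkeeping rather than analysis. The step requiring the most care is the kernel-moment and determinant computation, together with consistently matching the unnormalized kernel and the $\bm A_\gamma$ parametrization to the stated expressions, including the power of the determinant. The remainder terms are controlled uniformly by the bounded condition number, exactly as in \citet{ruppert1994multivariate}.
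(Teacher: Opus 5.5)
Your strategy is the paper's. You write $\sum_v\gamma_v(\tilde X_{i,v}-\tilde X_{j,v})^2=\|\bm A_\gamma(X_i-X_j)\|_2^2$, treat the estimator as a multivariate Nadaraya--Watson smoother with a radially symmetric Epanechnikov kernel in $X$, and apply the Ruppert--Wand expansion. The paper simply cites that expansion in H\"ardle et al.'s form, where you sketch the Taylor/LLN argument. You then compute $\|\mathcal K\|_2^2=p(p+2)\Gamma(p/2)/\{\pi^{p/2}(p+4)\}$ and $\mu_2(\mathcal K)=1/(p+4)$ in polar coordinates. Your constants agree with the paper's, apart from a slip in your first bullet: the unnormalized mass is $4\pi^{p/2}/\{p(p+2)\Gamma(p/2)\}$, which is what your own formula with $k=1$ gives.

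The genuine gap is the step you defer as bookkeeping. Take $H=(\bm A_\gamma^\top\bm A_\gamma)^{-1}$ in Ruppert--Wand's convention $K_H(u)=|H|^{-1/2}K(H^{-1/2}u)$. Your computation then yields
\begin{align*}
\mathrm{Bias}&\approx\frac{\nabla m(X_0)^\top(\bm A_\gamma^\top\bm A_\gamma)^{-1}\nabla f_X(X_0)}{(p+4)f_X(X_0)}+\frac{\operatorname{tr}\{(\bm A_\gamma^\top\bm A_\gamma)^{-1}\mathcal H_m(X_0)\}}{2(p+4)},\\
\mathrm{Var}&\approx n^{-1}\det(\bm A_\gamma^\top\bm A_\gamma)^{1/2}\,\|\mathcal K\|_2^2\,\frac{\sigma(X_0)}{f_X(X_0)}.
\end{align*}
The statement instead has $(\bm A_\gamma^\top\bm A_\gamma)^{-2}$, $\operatorname{tr}\{(\bm A_\gamma^\top\bm A_\gamma)\mathcal H_m(X_0)(\bm A_\gamma^\top\bm A_\gamma)\}$ and $\det(\bm A_\gamma^\top\bm A_\gamma)$. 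These are different powers of the same matrix, so no reconciliation of notation can turn your formulas into the stated ones.

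The paper reaches the stated forms by inserting its $H=(\bm A_\gamma^\top\bm A_\gamma)^{-1}$, which is the squared bandwidth, into H\"ardle et al.'s formulas, whose matrix is the bandwidth itself. There $HH^\top$ plays the role of your $H$, and $1/\det H$ plays the role of your $|H|^{-1/2}$. The paper also uses $H^{-1}$ instead of $H$ in the trace term.

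The scalar case $p=T=1$, $\gamma=h^{-2}$ settles which is right (everything evaluated at $X_0$):
\begin{align*}
\text{statement:}&\quad h^4m'f_X'/(5f_X)+m''/(10h^4) \text{ for the bias},\quad 3\sigma/(5nh^2f_X) \text{ for the variance};\\
\text{your computation:}&\quad h^2(m'f_X'/f_X+m''/2)/5 \text{ for the bias},\quad 3\sigma/(5nhf_X) \text{ for the variance}.
\end{align*}
Yours is the textbook result. So the lemma as written cannot be proved, and you should present your argument as proving the corrected formulas above rather than promising to match the displayed ones.

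The same inversion undermines your claim that A2 translates directly into the Ruppert--Wand conditions. A2 asks the entries of $\bm A_\gamma^\top\bm A_\gamma=H^{-1}$ to vanish, i.e.\ the bandwidth to grow. Then the kernel window does not shrink around $X_0$, and the local Taylor expansion underlying your bias computation is unavailable. What your proof actually needs is:
\begin{itemize}
\item the entries of $(\bm A_\gamma^\top\bm A_\gamma)^{-1}$ tend to zero;
\item $n^{-1}\det(\bm A_\gamma^\top\bm A_\gamma)^{1/2}\to0$;
\item the condition number stays bounded, which also supplies the invertibility you take as implicit.
\end{itemize}
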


Note that this approximation is to second order.  

\begin{proof}

To apply Theorem 2.1 in \citep{ruppert1994multivariate}, we must define the following expressions.

\subsubsection{Squared Norm of Kernel}
We denote the radial-symmetric (Epanechnikov) kernel by $\mathcal{K},$ as done in \citep{hardle2004nonparametric}.

Let
\[\mathcal{K}(u) = \frac{p(p+2)\Gamma(p/2)}{4\pi^{p/2}} (1 - \|u\|^2) \mathbbm{1}\{\|u\| \leq 1\}. \]

Then

\begin{align*}
    \|\mathcal{K}\|_2^2 
    &= \int \mathcal{K}(u)^2 du \\
    &= \left(\frac{p(p+2)\Gamma(p/2)}{4\pi^{p/2}}\right)^2 \int \left(1 - \|u\|^2\right)^2 \mathbbm{1}\left\{ \|u\| \leq 1 \right\} du \\
    &= \left(\frac{p(p+2)\Gamma(p/2)}{4\pi^{p/2}}\right)^2 
            \int_{0}^1 (1 - r^2)^2 r^{p-1} dr \int_{S^{p-1}} d\nu(v) \\
    &= \left(\frac{p(p+2)\Gamma(p/2)}{4\pi^{p/2}}\right)^2 
        \left( \int_0^1 (1 - r^2)^2 r^{p-1} dr \right) \left( \frac{2\pi^{p/2}}{\Gamma(p/2)} \right) \\
    &= \left(\frac{p(p+2)\Gamma(p/2)}{4\pi^{p/2}}\right)^2 
        \left( \frac{1}{p} - \frac{2}{p+2} + \frac{1}{p+4} \right) \left( \frac{2\pi^{p/2}}{\Gamma(p/2)} \right) \\
    &= \frac{ p^2(p+2)^2\Gamma(p/2)^2 }{ 16 \pi^p } 
        \left( \frac{1}{p} - \frac{2}{p+2} + \frac{1}{p+4} \right) 
        \left( \frac{2\pi^{p/2}}{\Gamma(p/2)} \right) \\
    &= \frac{ p^2(p+2)^2\Gamma(p/2)^2 }{ 16 \pi^p } 
        \cdot \frac{2\pi^{p/2}}{\Gamma(p/2)} 
        \cdot \frac{8}{p(p+2)(p+4)} \\
    &= \frac{ p(p+2)\Gamma(p/2) }{ \pi^{p/2}(p+4) }.
\end{align*}

\subsubsection{Second Moment of Epanechnikov Kernel}
\hfill

We wish to compute the second moment matrix
\[
M_2 = \int_{\mathbb{R}^p} u u^\top\, \mathcal{K}(u)\, du,
\]
where
\[
\mathcal{K}(u) = \frac{p(p+2)\Gamma(p/2)}{4\pi^{p/2}} (1 - \|u\|^2) \mathbbm{1}\{\|u\| \leq 1\}.
\]

By symmetry, $M_2$ is a scalar multiple of the identity:
\[
M_2 = \mu_2(\mathcal{K}) I_p,
\]
where
\[
\mu_2(\mathcal{K}) = \int_{\mathbb{R}^p} u_1^2\, \mathcal{K}(u)\, du.
\]

We can compute $\mu_2(\mathcal{K})$ as follows:

\begin{align*}
\mu_2(\mathcal{K}) &= \int_{\|u\|\leq 1} u_1^2\, \frac{p(p+2)\Gamma(p/2)}{4\pi^{p/2}} (1-\|u\|^2) du \\
       &= c \int_{\|u\|\le 1} u_1^2 (1-\|u\|^2) du,
\end{align*}
where $c = \frac{p(p+2)\Gamma(p/2)}{4\pi^{p/2}}$.

Let $u = r v$ with $r \in [0,1]$, $v \in S^{p-1}$ (the unit sphere). The Lebesgue measure transforms as $du = r^{p-1} dr\, d\nu(v)$, where $d\nu(v)$ is the surface measure on $S^{p-1}$. Also, $u_1^2 = r^2 v_1^2$ and $\|u\|^2 = r^2$.

Therefore,
\begin{align*}
\mu_2(\mathcal{K}) 
&= c \int_{0}^1 \int_{S^{p-1}} r^2 v_1^2 (1 - r^2)\, r^{p-1} d\nu(v) dr \\
&= c \int_{S^{p-1}} v_1^2 d\nu(v)\int_{0}^1 r^{p+1}(1 - r^2) dr.
\end{align*}

By symmetry, on the unit sphere,
\[
\int_{S^{p-1}} v_1^2 d\nu(v) = \frac{1}{p} \int_{S^{p-1}} \|v\|^2 d\nu(v) = \frac{1}{p} \mathrm{Area}(S^{p-1}),
\]
where
\[
\mathrm{Area}(S^{p-1}) = \frac{2\pi^{p/2}}{\Gamma(p/2)}.
\]
Therefore,
\[
\int_{S^{p-1}} v_1^2 d\nu(v) = \frac{2\pi^{p/2}}{p\,\Gamma(p/2)}.
\]

\begin{align*}
\int_0^1 r^{p+1}(1 - r^2) dr 
&= \int_0^1 r^{p+1} dr - \int_0^1 r^{p+3} dr \\
&= \left. \frac{r^{p+2}}{p+2} \right|_0^1 - \left. \frac{r^{p+4}}{p+4} \right|_0^1 \\
&= \frac{1}{p+2} - \frac{1}{p+4}.
\end{align*}

Then
\begin{align*}
\mu_2(\mathcal{K})&= c \cdot \frac{2\pi^{p/2}}{p\, \Gamma(p/2)} \left(\frac{1}{p+2} - \frac{1}{p+4}\right) \\
       &= \frac{p(p+2)\Gamma(p/2)}{4\pi^{p/2}} \cdot \frac{2\pi^{p/2}}{p\, \Gamma(p/2)} \left(\frac{1}{p+2} - \frac{1}{p+4}\right) \\
       &= \frac{p+2}{2} \left(\frac{1}{p+2} - \frac{1}{p+4}\right) \\
       &= \frac{1}{2}\left(\frac{p+4-(p+2)}{p+4}\right) \\
       &= \frac{1}{2} \cdot \frac{2}{p+4} \\&= \frac{1}{p+4}.
\end{align*}

\subsubsection{Final Distribution-Free Conditional Bias and Variance}
\hfill

We must slightly modify the proof of Theorem 2.1 in \citep{ruppert1994multivariate} to achieve the desired result.
$H^{1/2}$ denotes the bandwidth matrix. We show below that in  our setting, $H = (\bm A_\gamma^\top\bm A_\gamma)^{-1}:$
\begin{align*}
    \|H^{-1/2}u\|_2^2 &= \|(\bm A_\gamma^\top\bm A_\gamma)^{1/2}u\|_2^2\\
    &= \|(\bm A_\gamma^\top\bm A_\gamma)^{1/2}u\|_2^2\\
    &= u^\top((\bm A_\gamma^\top\bm A_\gamma)^{1/2})^\top(\bm A_\gamma^\top\bm A_\gamma)^{1/2}u\\
    &=  u^\top(\bm A_\gamma^\top\bm A_\gamma)^{1/2}(\bm A_\gamma^\top\bm A_\gamma)^{1/2}u\\
    &=  u^\top(\bm A_\gamma^\top\bm A_\gamma)u\\
    &= \|\bm A_\gamma u\|_2^2\\
\end{align*}
as needed.

Now we apply Theorem 2.1 in \citep{ruppert1994multivariate} (NW version given in Section 4.5.1 in \citep{hardle2004nonparametric}) as 

\[
\text{Bias} \left\{ \hat{m}_{h}(X_0) \mid X_1, \ldots, X_p \right\} 
\approx \mu_2(K) \frac{ \nabla m(X_0)^\top H H^\top \nabla f_X(X_0) }{ f_{X}(X_0) }
+ \frac{1}{2} \mu_2(K) \, \text{tr} \left\{ H^\top \mathcal{H}_m(X_0) H \right\},
\]

\[
\text{Var} \left\{ \hat{m}_{H}(X_0) \mid X_1, \ldots, X_p\right\} 
\approx \frac{1}{n \, \det(H)} \| \mathcal{K} \|_2^2 \, \frac{ \sigma(x) }{ f_{X}(x) },
\]
in the interior of the support of $f_{X}$.
% Noting that in our case, $\mathbf{H}  = hI_\tau,$ we arrive at the desired result.

Thus, 
\begin{align*}
    \text{Bias} \left\{ \hat{m}_{h}(X_0) \mid X_1, \ldots, X_p \right\}
&\approx \frac{ \nabla m(X_0)^\top (\bm A_\gamma^\top\bm A_\gamma)^{-2}\nabla f_X(X_0) }{ (p+4)f_{X}(X_0) }
+ \frac{\text{tr} \left\{ (\bm A_\gamma^\top\bm A_\gamma)^\top \mathcal{H}_m(X_0) \bm A_\gamma^\top\bm A_\gamma \right\}}{2(p+4)}\\
    \text{Var} \left\{ \hat{m}_{h}(X_0) \mid X_1, \ldots, X_p \right\} 
&\approx  n^{-1}\det(\bm A_\gamma^\top\bm A_\gamma) \left( \frac{ p(p+2)\Gamma(p/2) }{ \pi^{p/2}(p+4) } \right) \frac{ \sigma(X_0) }{ f_{X}(X_0) }.\\
\end{align*}

\end{proof}

\newpage

\section{Proofs of Model Selection Consistency}
\label{app:selectionConsistency}
At times throughout the proofs, we will use $L_n(\g)$ to denote $L_n(\g; \tilX, \Y)$ for notational brevity. 

\begin{lemma}\label{lem:boundedDerivative}
Under \Cref{assump:bounded}, the leave-one-out cross validation loss,
\begin{equation}
     L_n(\g; \tilX, \Y) = \frac{1}{n} 
\sum_{i=1}^{n}\left(Y_i - \hat m_{-i}(\tilde X_i; \tilX, \Y, \gamma) \right)^2,
\end{equation}
has uniformly bounded gradients. That is, for all $k\in \mathcal{T}$,
\begin{equation}\label{eq:bounded_grad}
\begin{aligned}
\left \vert \frac{\partial L_n(\bm X, \Y; \g)}{\partial \gamma_{k}} \right\vert&\leq 16 B^4. 
\end{aligned}
\end{equation}
\begin{proof}
    
First, note that
\begin{equation}
\begin{aligned}
   \frac{\partial L_n(\g; \tilX, \Y)}{\partial \gamma_{k}} = \frac{1}{n}\sum_{i=1}^n 2\left(Y_i - \hat{m}_{-i}(\tilde X_i; \tilX,\Y, \g) \right)\frac{\partial \hat{m}_{-i}(\tilde X_i; \tilX,\Y, \g) }{\partial \gamma_{k}},
\end{aligned}
\end{equation}
and let $d_{ij,v} = (\tilde X_{iv} - \tilde X_{jv})$ and $\omega_{ij} = \exp(-\sum_{v=1}^T \g_v d_{ij,v}^2)$. Then, $\omega_{ij}' := \frac{\partial \omega_{ij}}{\partial \gamma_{k}} = - d_{ij,k}^2 \omega_{ij}$ so that
\begin{equation}
\nonumber
\begin{aligned}
\frac{\partial \hat{m}_{-i}(X_i; \tilX,\Y,\g)}{\partial \gamma_{k}} &=  \frac{\partial}{\partial \gamma_{k}} \frac{\sum_{j\neq i} \omega_{ij} Y_j }{\sum_{j\neq i} \omega_{ij}} =  \frac{(\sum_{j\neq i} {\omega}'_{ij} Y_j) (\sum_{j\neq i} \omega_{ij}) - (\sum_{j\neq i} \omega'_{ij})(\sum_{j\neq i} \omega_{ij} Y_j))  }{(\sum_{j\neq i} \omega_{ij})^2}\\
 &= \frac{1}{\sum_{j\neq i} \omega_{ij}}\left(\sum_{j\neq i} \omega'_{ij} Y_j -  (\sum_{j\neq i} \omega'_{ij})\frac{\sum_{j\neq i} \omega_{ij} Y_j}{\sum_{j\neq i} \omega_{ij}} \right)\\
  &= \frac{1}{\sum_{j\neq i} \omega_{ij}}\left(\sum_{j\neq i} \omega'_{ij}\left[ Y_j -  \frac{\sum_{\ell\neq i} \omega_{i\ell} Y_\ell}{\sum_{\ell\neq i} \omega_{i\ell}}\right] \right)\\
   &= \frac{1}{\sum_{j\neq i} \omega_{ij}}\left(\sum_{j\neq i} \omega'_{ij}\left[ Y_j -  \hat{m}_{-i}(X_i; \tilX,\Y,\g)\right] \right) \\
&=-\frac{\sum_{j\neq i}  \omega_{ij}\left[ Y_j - \hat{m}_{-i}(X_i; \tilX,\Y,\g)\right]d_{ij,k}^2}{\sum_{j\neq i} \omega_{ij}}.
\end{aligned}
\end{equation}
By Assumption \ref{assump:bounded} we have

\begin{equation}
\begin{aligned}
 \left\vert \frac{\partial \hat{m}_{-i}(X_i; \tilX,\Y,\g)}{\partial \gamma_{k}} \right\vert &= 
\left\vert \frac{\sum_{j\neq i}  \omega_{ij}\left[ Y_j - \hat{m}_{-i}(X_i; \tilX,\Y,\g)\right]d_{ij,k}^2}{\sum_{j\neq i} \omega_{ij}}\right\vert\\ 
&\leq 2B \left\vert \frac{\sum_{j\neq i}  w_{ij}d_{ij,k}^2}{\sum_{j\neq i} w_{ij}}\right\vert \leq 8 B^3.
\end{aligned}
\end{equation}
Thus, since $\vert(Y_i - \hat{m}_{-i}(\tilde X_i; \tilX,\Y, \g) \vert < 2B$, we have
\begin{equation}\label{eq:bounded_grad}
\begin{aligned}
\left \vert \frac{\partial L_n(\bm X, \Y; \g)}{\partial \gamma_{k}} \right\vert&\leq 16 B^4. 
\end{aligned}
\end{equation}

\end{proof}
    
\end{lemma}

\noFalsePos*

\begin{proof}
Recall that KR-TEXAS solves the following problem: 
\begin{equation}\label{eq:kr_restate}
\hat{\g} = \arg\min_{\g}L_n(\g; \tilX, \Y)
 + \lambda_n \sum_{v=1}^T \hat w_v\gamma_v,
 \end{equation}
where  
\begin{equation}\small
 L_n(\g; \tilX, \Y) = \frac{1}{n} 
\sum_{i=1}^{n}\left(Y_i - \hat m_{-i}(\tilde X_i; \tilX, \Y, \gamma) \right)^2 \text { and } 
     \hat m_{-i}(\tilde X_i; \tilX, \Y, \gamma)  = \frac{\sum_{j\neq i} \K(\tilde X_j, \tilde X_i) Y_j}{\sum_{j\neq i}\K(\tilde X_j, \tilde X_i)}.
 \end{equation}

If $\gammadag$ is a stationary point of \Cref{eq:kr_restate} and $\gammadag_{k} \neq 0$, the subgradient (KKT) condition that must be satisfied is
\[
 \frac{\partial L_n(\bm X, \Y; \gammadag)}{\partial \gamma_{k}} = -\lambda_n \hat w_{k}.
\]

However, if $k \not \in \I^\star$, then by Assumption \ref{assump:tuning} one of the three following conditions must hold:
\begin{enumerate}
    \item There exists $j,k \in \leaves(k')$ such that $\partial_j m \neq \partial_k m$, so
    \begin{itemize}
        \item $C_{k,1} = \Omega(1)$, and thus $\hat{w}_{k'} = \Omega(n^{a_2b})$.
    \end{itemize}
\item $\partial_j m = \partial_k m$ for all $j,k \in  \leaves(k')$ and $\partial_j m = 0$ for all $j \in \rm leaves(k')$, so
    \begin{itemize}
        \item $C_{k',2} = O(n^{-a_1})$, and thus $\hat{w}_{k'} = \Omega(n^{a_1b})$. 
    \end{itemize}
        \item $\partial_j m = \partial_k m \neq 0$ for all $j,k \in \leaves(k')$ and $\partial_j m = \partial_k s$ for all $j \in \leaves(k')$ and $s \in \bigcup_{k \in \text{sib}(t)}\leaves(k)$, so
            \begin{itemize}
        \item $C_{k',3} = O(n^{-a_1})$, and thus $\hat{w}_{k'} = \Omega(n^{a_1b})$.  
    \end{itemize}
    \end{enumerate}
By Assumption \ref{assump:tuning} for large enough $n$, $a_2b - d > 0$ and $a_1b - d > 0$, so $\lambda_n \hat{w}_{k'} \xrightarrow{p} \infty$. Thus, by~\Cref{lem:boundedDerivative},
\begin{align*}
P\left(\exists \gammadag \text{ such that }  \gammadag_{k}>0 \text{ for } k \not \in \Istar\right) &\leq P\left(\exists k \not \in \Istar \text{ such that } \max_{\g} \frac{\partial L_n(\bm X, \Y; \bm \gamma)}{\partial \gamma_{k}} \geq \lambda_n \hat w_{k'}\right)\\
& \leq P\left(\exists k \not \in \Istar \text{ such that } 16B^4 \geq \lambda_n \hat w_{k'}\right)
\rightarrow 0.    
\end{align*}

\end{proof}

\subsection{Proof of \Cref{thm:selectionConsistency}}
To prove \Cref{thm:selectionConsistency}, we first establish several auxiliary lemmas. In most settings, we will fix $\g$. Thus, for notational convenience, let 
\begin{align*}
N_i(x) &= \sum_{j \neq i}\K(x, X_j) Y_j \\
D_i(x) &= \sum_{j \neq i}\K(x, X_j) \\
D_i &= D_i(X_i)\\
N_i &= N_i(X_i)\\
\phi(x) &= \E(\K(x, X))\\
\psi(x) &= \E(\K(x, X)Y)\\
\mu_i &= \E(D_i \mid X_i) = (n-1)\phi(X_i)\\
\nu_i &= \E(N_i \mid X_i) = (n-1)\psi(X_i)\\
L_i &= \left( Y_i - N_i/D_i\right)^2 = \left( Y_i - \hat m_{-i}(X_i)\right)^2 \\
L_n &= \frac{1}{n}\sum_{i = 1}^n L_i = \frac{1}{n}\sum_{i = 1}^n \left( Y_i - N_i/D_i\right)^2\\
\g^{(-1/2)} &= \prod_{v}(\gamma_v^{-1/2} + 1_{\{\gamma_v = 0\}}).
\end{align*}

\begin{lemma}\label{lem:goodEventWHP}
Under \Cref{assump:bounded}, for any $\epsilon \in (0, 1)$ and fixed $\g$, 
\begin{equation}
P\left( \bigcup_i \left\{\vert D_i - \mu_i\vert \geq \epsilon \mu_i \cup \vert N_i - \nu_i\vert \geq B\mu_i \epsilon\right\} \right) \leq 4n \exp(-3^{-1}c_f\epsilon^2 n\g^{(-1/2)}).
\end{equation}

\end{lemma}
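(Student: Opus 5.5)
The proof is a conditional Chernoff bound for each fixed $i$, with a union bound over the $2n$ events at the end. Condition on $X_i$. Then $D_i=\sum_{j\neq i}\K(X_i,X_j)$ is a sum of $n-1$ i.i.d. terms taking values in $[0,1]$, since the Gaussian kernel is bounded by one, and its conditional mean is $\mu_i=(n-1)\phi(X_i)$. Similarly, $N_i=\sum_{j\neq i}\K(X_i,X_j)Y_j$ is a sum of $n-1$ i.i.d. terms with conditional mean $\nu_i=(n-1)\psi(X_i)$. Each summand $\K(X_i,X_j)Y_j/B$ lies in $[-1,1]$, because $\vert Y_j\vert<B$ by \Cref{assump:bounded}.

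For $D_i$, apply the multiplicative Chernoff bound for sums of independent $[0,1]$ variables:
\[
P(\vert D_i-\mu_i\vert\ge\epsilon\mu_i\mid X_i)\le 2\exp(-\epsilon^2\mu_i/3),\qquad \epsilon\in(0,1).
\]
For $N_i$, use Bernstein's inequality rather than Hoeffding's, because the deviation must be measured relative to $\mu_i$ and not relative to $n$. Each centered summand $Z_j=\K(X_i,X_j)Y_j-\psi(X_i)$ satisfies $\vert Z_j\vert\le 2B$ and $\E(Z_j^2\mid X_i)\le B^2\E(\K^2)\le B^2\phi(X_i)$, using $\K^2\le\K$. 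Bernstein then gives
\[
P(\vert N_i-\nu_i\vert\ge B\epsilon\mu_i\mid X_i)\le 2\exp\!\Big(-\frac{B^2\epsilon^2\mu_i^2/2}{B^2\mu_i+2B^2\epsilon\mu_i/3}\Big)\le 2\exp(-c\,\epsilon^2\mu_i).
\]
Here $c$ is an absolute constant; matching the stated $1/3$ exactly may require a slightly sharper multiplicative Chernoff bound applied to $N_i$ after splitting $Y$ into positive and negative parts.

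Next, use the lower bound in \Cref{assump:bounded}. Since $\min_x\phi(x)\ge c_f\prod_{v:\gamma_v\neq0}\gamma_v^{-1/2}=c_f\g^{(-1/2)}$, we get $\mu_i\ge(n-1)c_f\g^{(-1/2)}$ uniformly in $X_i$. Integrating out $X_i$ gives unconditional tail bounds of order $2\exp(-3^{-1}c_f\epsilon^2 n\g^{(-1/2)})$; the factor $n-1$ versus $n$ is absorbed by adjusting constants or by assuming $n\ge2$. A union bound over the $n$ indices and the two events per index then yields the stated $4n\exp(-3^{-1}c_f\epsilon^2n\g^{(-1/2)})$.

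The main obstacle is the bound for $N_i$. A naive Hoeffding bound gives a deviation of order $B\sqrt n$, which is too weak when $\mu_i\ll n$, that is, when $\g$ is large. The variance-sensitive step $\E(\K^2Y^2)\le B^2\phi$ is essential: it makes the exponent scale with $\mu_i$ instead of $\mu_i^2/n$. Getting the precise constant $1/3$ is bookkeeping that I would not belabor.
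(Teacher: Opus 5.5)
Your skeleton matches the paper's proof: condition on $X_i$, use $\K^2\le \K$ to bound the conditional variances of $D_i$ and $N_i$ by $\mu_i$ and $B^2\mu_i$, apply a Bernstein-type tail bound, use \Cref{assump:bounded} to get $\mu_i\ge (n-1)c_f\g^{(-1/2)}$ uniformly in $X_i$, and take a union bound over the $2n$ events. What is missing is the explicit constant $3^{-1}$ in the statement. The bounds you write down do not reach it, and the repairs you suggest would not recover it.

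For $N_i$, centering the summands and using $|Z_j|\le 2B$ gives the exponent $\epsilon^2\mu_i/(2+4\epsilon/3)$. For $\epsilon$ near $1$ this is about $\tfrac{3}{10}\epsilon^2\mu_i$, and $\tfrac{3}{10}(n-1)<\tfrac{1}{3}n$ for every $n$. Splitting $Y_j$ into positive and negative parts does not help. The deviation budget $B\epsilon\mu_i$ would then be shared between two sums, and the number of events in the union bound doubles. For $D_i$, the weakened Chernoff form $2\exp(-\epsilon^2\mu_i/3)$ has no slack. It gives $3^{-1}c_f\epsilon^2(n-1)\g^{(-1/2)}$ in the exponent rather than $3^{-1}c_f\epsilon^2 n\g^{(-1/2)}$, and assuming $n\ge 2$ does not close that difference.

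The fix is Bernstein's inequality for uncentered summands bounded above (e.g., Corollary 2.11 in Boucheron, Lugosi and Massart). If independent $W_j\le b$ almost surely, then $P\bigl(\sum_j (W_j-\E W_j)\ge t\bigr)\le \exp\{-t^2/(2(\sum_j \E W_j^2+bt/3))\}$.

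\begin{itemize}
\item Apply it conditionally on $X_i$ to $W_j=\pm\K(X_i,X_j)Y_j$, with $b=B$, $\sum_j\E(W_j^2\mid X_i)\le B^2\mu_i$ and $t=B\epsilon\mu_i$.
\item Apply it to $W_j=\pm\K(X_i,X_j)$, with $b=1$ and $t=\epsilon\mu_i$.
\end{itemize}

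Both give the paper's exponent $\epsilon^2\mu_i/(2+2\epsilon/3)$. Since $\epsilon<1$, this is at least $\tfrac{3}{8}\epsilon^2\mu_i\ge \tfrac{3}{8}c_f\epsilon^2(n-1)\g^{(-1/2)}$. Because $\tfrac{3}{8}(n-1)\ge \tfrac{1}{3}n$ once $n\ge 9$, the stated bound $4n\exp(-3^{-1}c_f\epsilon^2 n\g^{(-1/2)})$ follows; the paper leaves this $n\ge 9$ step implicit.

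Your version, with an unspecified absolute constant, would suffice for the later uses in Lemmas~\ref{lem:loocv_concentration} and~\ref{lem:unifControl}, where only the order $\epsilon^2 n\g^{(-1/2)}$ of the exponent matters. It does not, however, prove the lemma as stated.
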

\begin{proof}
For each $i$, conditional on $X_i$, $D_i = \sum_{j\neq i}K(X_i, X_j)$ is the sum of i.i.d variables in $(0,1]$. Furthermore, $\var(K(X_i, X_j) \mid X_i) \leq \E(K(X_i, X_j)^2 \mid X_i) \leq \E(K(X_i, X_j) \mid X_i) = \phi(X_i)$. Thus, $\var(D_i \mid X_i) \leq \mu_i$. Letting $Z_j = K(X_i, X_j) - \E(\K(X_i, X_j)\mid X_i)$, we also have $\var(Z_j\mid X_i) \leq \phi(X_i)$ and $\var(D_i \mid X_i) \leq \mu_i$. Thus, by Bernstein's inequality we have 
\begin{align*}
P\left(\vert D_i - \mu_i \vert \geq \epsilon\mu_i \mid X_i \right) 
&
\leq 2\exp\left(-\frac{(\epsilon\mu_i)^2}{2\mu_i + 2/3\epsilon\mu_i}\right)\\
&\leq 2\exp\left(-\frac{\epsilon^2\mu_i}{2 + 2/3\epsilon}\right)\\ 
&\leq 2\exp\left(-3^{-1}c_f\epsilon^2 n\g^{(-1/2)}\right),
\end{align*}
where the last inequality comes from~\Cref{assump:bounded}. 

Analogously, note that conditional on $X_i$, $N_i = \sum_{j\neq i}Y_jK(X_i, X_j)$ is the sum of i.i.d variables in $[-B,B]$. Furthermore, $\var(K(X_i, X_j)Y_j \mid X_i) \leq \E(B^2K(X_i, X_j)^2 \mid X_i) \leq B^2\E(K(X_i, X_j) \mid X_i) = B^2\phi(X_i)$. Thus, $\var(N_i \mid X_i) \leq B^2\mu_i$. Letting $Z_j = Y_jK(X_i, X_j) - \E(Y_j\K(X_i, X_j)\mid X_i)$, we also have $\var(Z_j\mid X_i) \leq B^2\phi(X_i)$ and $\var(N_i \mid X_i) \leq B^2\mu_i$. Again, by Bernstein's inequality we have 
\begin{align*}
P\left(\vert N_i - \nu_i \vert \geq \epsilon B\mu_i \mid X_i \right) 
&
\leq 2\exp\left(-\frac{(B\epsilon\mu_i)^2}{2B^2\mu_i + 2/3B^2\epsilon\mu_i}\right)\\
&\leq 2\exp\left(-\frac{\epsilon^2\mu_i}{2 + 2/3\epsilon}\right)\\ 
&\leq 2\exp\left(-3^{-1}c_f\epsilon^2 n\g^{(-1/2)}\right).
\end{align*}

Since the RHS of both inequalities do not depend on $X_i$, we use a union bound over the $2n$ events to conclude that 
 \begin{equation}
P\left( \bigcup_i \left\{\vert D_i - \mu_i\vert \geq \epsilon \mu_i \cup \vert N_i - \nu_i\vert \geq B\mu_i \epsilon\right\} \right) \leq 4n \exp(-3^{-1}c_f\epsilon^2 n\g^{(-1/2)}).
\end{equation}
\end{proof}

\begin{lemma}\label{lem:loocv_concentration}
Under \Cref{assump:bounded}, for any fixed $\g$ and $\epsilon \in (0, 1/2)$, with probability at least $1 - 4n \exp(-3^{-1}c_f\epsilon^2 n\g^{(-1/2)})$ we have:
\begin{equation}
\vert L_n - \E(L_n) \vert < c_B( \epsilon + n\exp(-3^{-1}c_f\epsilon^2n\g^{(-1/2)}))
\end{equation}
for some constant $c_B$ which depends only on $B$.
When $n\g^{(-1/2)} > \frac{3}{c_f\epsilon^2}\log(n/\epsilon)$, this simplifies to 
\begin{equation}
\vert L_n - \E(L_n) \vert < c_B\epsilon.
\end{equation}

\end{lemma}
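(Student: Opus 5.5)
The plan is to work on the good event $\mathcal{E}$ of \Cref{lem:goodEventWHP}. On $\mathcal{E}$, every denominator and numerator in the leave-one-out loss is uniformly close to its conditional mean, so each $L_i$ can be linearized around the ``population'' ratio $\nu_i/\mu_i = \psi(X_i)/\phi(X_i)$. Write $L_i = (Y_i - N_i/D_i)^2$ and define $\bar L_i = (Y_i - \nu_i/\mu_i)^2$. On $\mathcal{E}$ we have
\[
\left|\frac{N_i}{D_i}-\frac{\nu_i}{\mu_i}\right| \le \frac{|N_i-\nu_i|}{D_i} + \frac{|\nu_i|\,|D_i-\mu_i|}{\mu_i D_i} \le \frac{B\epsilon\mu_i + B\epsilon\mu_i}{(1-\epsilon)\mu_i} \le 4B\epsilon ,
\]
where we used $|\nu_i|\le B\mu_i$ and $\epsilon<1/2$. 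Since $|Y_i|$ and both ratios are bounded by $B$, this gives $|L_i-\bar L_i| \le 4B\cdot 4B\epsilon = 16B^2\epsilon$ for every $i$ simultaneously. Consequently $|L_n - \bar L_n| \le 16B^2\epsilon$ on $\mathcal{E}$, where $\bar L_n = n^{-1}\sum_i \bar L_i$.

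Next, $\bar L_n$ is an average of i.i.d.\ bounded variables, because $\bar L_i$ depends only on $(X_i,Y_i)$ through the deterministic functions $\phi,\psi$ at fixed $\g$. Hence $|\bar L_n - \E \bar L_1|$ is controlled by Hoeffding's inequality with range $4B^2$. To avoid adding a separate failure probability, I would choose the deviation level so that Hoeffding's tail $2\exp(-n\epsilon^2/(8B^4))$ is dominated by the stated bound. This works because $\g^{(-1/2)}\lesssim 1$ up to constants in the relevant regime and $c_f$ can be taken small. If that absorption fails, I would instead bound $\var(\bar L_n)$ and fold the term into $c_B\epsilon$; this is the step where the constants need care.

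Then I would compare $\E(L_n)$ with $\E\bar L_1$. We have $|\E L_n - \E\bar L_n| \le \E|L_n-\bar L_n|\mathbf 1_{\mathcal{E}} + 4B^2 P(\mathcal{E}^c)$. The first term is at most $16B^2\epsilon$, and the second is at most $16B^2 n\exp(-3^{-1}c_f\epsilon^2 n\g^{(-1/2)})$ by \Cref{lem:goodEventWHP}. This is exactly where the additive term $n\exp(\cdot)$ in the statement comes from. Combining the three pieces by the triangle inequality gives
\[
|L_n-\E L_n| \le c_B\left(\epsilon + n\exp(-3^{-1}c_f\epsilon^2 n\g^{(-1/2)})\right)
\]
with probability at least $1-4n\exp(-3^{-1}c_f\epsilon^2 n\g^{(-1/2)})$, with $c_B$ a multiple of $B^2$.

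For the simplification, note that if $n\g^{(-1/2)} > \frac{3}{c_f\epsilon^2}\log(n/\epsilon)$, then $n\exp(-3^{-1}c_f\epsilon^2 n\g^{(-1/2)}) < n\cdot \epsilon/n = \epsilon$. The bound then collapses to $2c_B\epsilon$, and we rename the constant. The main obstacle is the middle step: the i.i.d.\ concentration of $\bar L_n$ must produce a failure probability no larger than the one stated. I would handle it by absorbing the Hoeffding term into the constants as described above. As a fallback, I would note that the only later use is for large $n$, where the Hoeffding tail is negligible relative to the stated one.
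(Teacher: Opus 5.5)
Your argument is correct, and it departs from the paper's at exactly the point where the paper's proof is incomplete. Both proofs work on the good event $\mathcal{A}_\epsilon$ of Lemma~\ref{lem:goodEventWHP} and compare $L_i$ with $\ell_i(\nu_i,\mu_i)=(Y_i-\psi(X_i)/\phi(X_i))^2$. The paper does this by a mean-value expansion with gradient bounds $C_B\mu_i^{-1}$; you use the direct ratio bound $|N_i/D_i-\nu_i/\mu_i|\le 4B\epsilon$, which is equivalent and more explicit. The paper then shows $|\E(L_i\mid X_i,Y_i)-\ell_i(\nu_i,\mu_i)|\le C_B\epsilon+4B^2P(\mathcal{A}_\epsilon^c)$, which is your third step done conditionally, and obtains $|L_i-\E(L_i\mid X_i,Y_i)|\lesssim\epsilon+ne^{-\cdots}$. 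It then drops the conditioning and averages over $i$. No concentration across $i$ is ever used. Moreover, the intermediate per-observation claim $|L_i-\E L_i|\lesssim\epsilon$ is false in general, because $\ell_i(\nu_i,\mu_i)$ is a nondegenerate function of $(X_i,Y_i)$ (it contains $\varepsilon_i$). Your Hoeffding step for the i.i.d.\ average $\bar L_n$ is precisely the missing ingredient. So your decomposition $L_n-\E L_n=(L_n-\bar L_n)+(\bar L_n-\E\bar L_n)+(\E\bar L_n-\E L_n)$ is the right proof.

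The only cost is the extra tail $2\exp(-n\epsilon^2/(8B^4))$, and your absorption argument can be made exact without restricting to a ``relevant regime.'' Since $K_{\g}\le 1$, Assumption~\ref{assump:bounded} itself gives $c_f\,\g^{(-1/2)}\le\min_x\E K_{\g}(\tilde x,\tilde X)\le 1$ for every $\g$. The assumption stays true when $c_f$ is decreased, so you may replace $c_f$ by $c_f\min\{1,3/(8B^4)\}$. This gives $c_f\,\g^{(-1/2)}\le 3/(8B^4)$, and hence the Hoeffding tail is at most $2\exp(-3^{-1}c_f\epsilon^2 n\g^{(-1/2)})$. The total failure probability is then $(4n+2)\exp(-3^{-1}c_f\epsilon^2n\g^{(-1/2)})$. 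Halving $c_f$ once more restores the factor $4n$: with $y=3^{-1}(c_f/2)\epsilon^2n\g^{(-1/2)}$, the claim is vacuous unless $4ne^{-y}<1$, and in that case $(4n+2)e^{-2y}\le 4ne^{-y}$. So you obtain the lemma with a possibly smaller $c_f$, which is harmless for Lemma~\ref{lem:unifControl}. The variance fallback and the large-$n$ remark are unnecessary.
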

\begin{proof}
Let $\Delta_i := (\Delta N_i, \Delta D_i)$ where 
\begin{align*}
\Delta N_i &:= N_i - \nu_i,   \\
\Delta D_i &:= D_i - \mu_i. \\
\end{align*}
Let $\ell_i(n, d; Y) = (Y - n/d)^2$; the partial derivatives are:
\begin{align*}
    \partial\ell_i/ \partial n &= \frac{-2}{d}(Y - n/d)\\
    \partial\ell_i/ \partial d &= \frac{2n}{d^2}(Y - n/d).
\end{align*}
\begin{comment}
\begin{align*}
    \partial\ell_i/ \partial n &= \frac{-2}{d}(Y - n/d)\\
    \partial\ell_i/ \partial d &= \frac{2n}{d^2}(Y - n/d).
%    \partial^2\ell_i/ \partial n^2 &= \frac{2}{d^2}\\
%    \partial^2\ell_i/ \partial d^2 &= \frac{6n^2}{d^4} - \frac{4nY}{d^3}\\
%    \partial^2\ell_i/ \partial n \partial d &= \frac{2}{d^2}(Y - 2n/d).\\
\end{align*}
\end{comment}
We now bound each of the terms, conditional on the event for $\epsilon \in (0, 1/2)$:
\[\mathcal{A}_\epsilon = \bigcap_i \left\{\vert D_i - \mu_i\vert \leq \epsilon \mu_i \cap \vert N_i - \nu_i\vert \leq B\mu_i \epsilon\right\}\]
which, by Lemma~\ref{lem:goodEventWHP}, occurs with probability at least $1 - 4n \exp(-3^{-1}c_f\epsilon^2 n\g^{(-1/2)})$.

Let $(n_t, d_t):= (\nu_i, \mu_i) + t \Delta_i$ for some $t \in [0,1]$. First note that:
\begin{align*}
d_t &\in [(1-\epsilon)\mu_i, (1+\epsilon)\mu_i]\\
\left\vert \frac{n_t}{d_t} \right\vert &\leq \frac{\max(\vert \nu_i\vert ,\vert N_i\vert)}{d_t}\leq \frac{B\mu_i(1+\epsilon)}{\mu_i(1-\epsilon)} = B\frac{1+\varepsilon}{1-\varepsilon}\leq 3B\\
\vert Y - n_t/d_t \vert &\leq \vert Y \vert + \vert n_t / d_t \vert \leq B + B\frac{1+\varepsilon}{1-\varepsilon}\leq 4B.  
\end{align*}
Then,
\begin{align*}
    \vert \partial\ell_i/ \partial n \vert  &= \vert \frac{-2}{d}(Y - n/d) \vert \leq C_{B}\mu_i^{-1}\\
     \vert \partial\ell_i/ \partial d \vert  &= \vert \frac{2n}{d^2}(Y - n/d)\vert\leq C_{B}\mu_i^{-1},
\end{align*}
so that
\begin{equation}\label{eq:taylor_remainder}
    \vert \nabla\ell_i(n_t, d_t)^T \Delta_i \vert \leq C_{B} \epsilon.
\end{equation}

\begin{comment}
\begin{align*}
    \vert \partial\ell_i/ \partial n \vert  &= \vert \frac{-2}{d}(Y - n/d) \vert \leq C_{B}\mu_i^{-1}\\
     \vert \partial\ell_i/ \partial d \vert  &= \vert \frac{2n}{d^2}(Y - n/d)\vert\leq C_{B}\mu_i^{-1} \\
 \vert \partial^2\ell_i/ \partial n^2 \vert&= \vert\frac{2}{d^2} \vert \leq \frac{2}{(1-\epsilon)^2\mu_i^2}\\
   \vert \partial^2\ell_i/ \partial d^2 \vert &= \vert\frac{6n^2}{d^4} - \frac{4nY}{d^3}\vert \leq 6\left(B\frac{1+\epsilon}{1-\epsilon}\right)\frac{1}{(1-\epsilon)^2\mu_i^2} + 4B\left(B\frac{1+\epsilon}{1-\epsilon}\right)\frac{1}{(1-\epsilon)^2\mu_i^2} \leq C_{B, \epsilon}\frac{1}{\mu_i^2} \\
   \vert \partial^2\ell_i/ \partial n \partial d\vert &= \vert\frac{2}{d^2}(Y - 2n/d)\vert \leq \frac{2(B+ B(1+\epsilon)/(1-\epsilon)}{(1-\epsilon)^2\mu_i^2}.\\
\end{align*}
%Thus, every element of $\nabla\ell_i(\nu_i, \mu_i)$ is bounded above in absolute value by $C_{B, \epsilon}\mu_i^{-1}$ and every element of $H(n_t, d_t)$ is bounded in absolute value by $C_{B, \epsilon}\mu_i^{-2}$.
\end{comment}

Then the Taylor expansion for $L_i = \ell_i(N_i, D_i)$ around $\mu_i, \nu_i$ is:
\begin{align*}
L_i &=\ell_i(\nu_i, \mu_i) + \nabla\ell_i(n_t, d_t)^T \Delta_i %+ \frac{1}{2}\Delta_i^T H((\nu_i, \mu,i) + \theta_i\Delta_i)\Delta_i.
\end{align*}
for some $t \in (0,1)$. Thus,
\begin{equation}\label{eq:li_to_exp}
\vert L_i - \E(L_i \mid X_i, Y_i) \vert \leq \vert\ell_i(\nu_i, \mu_i) - \E(L_i(N_i, D_i)\mid X_i, Y_i) \vert  + C_{B} \epsilon. 
\end{equation}
To bound the first term, note that conditional on $X_i, Y_i$, we have
\begin{equation}\label{eq:l_nu_mu_to_exp}
    \begin{aligned}
    \vert \E(L_i \mid X_i, Y_i) -\ell_i(\nu_i, \mu_i) \vert &= \left\vert \E\left(\left[L_i -\ell_i(\nu_i, \mu_i)\right]1_{\mathcal{A}_\epsilon} + \left[L_i -\ell_i(\nu_i, \mu_i)\right]1_{\mathcal{A}^C_\epsilon}  \mid X_i, Y_i\right)  \right\vert \\
    &\leq \left\vert \E\left(\left[L_i -\ell_i(\nu_i, \mu_i)\right]1_{\mathcal{A}_\epsilon}  \mid X_i, Y_i\right) \right \vert\\
    &\quad+ \left\vert \E\left(\left[L_i-\ell_i(\nu_i, \mu_i)\right]1_{\mathcal{A}^C_\epsilon}  \mid X_i, Y_i\right) \right \vert\\
    &\leq C_B \epsilon + 4B^2P(\mathcal{A}^C_\epsilon) \\
    &\leq C_B( \epsilon + n\exp(-3^{-1}c_f\epsilon^2n\g^{(-1/2)})). 
    \end{aligned}
\end{equation}
Combining \Cref{eq:li_to_exp} and \Cref{eq:l_nu_mu_to_exp}, we have on the event $\mathcal{A}_\epsilon$ that
\begin{equation}\label{eq:li_bound}
\vert \ell_i(N_i, D_i) - \E(L_i(N_i, D_i)) \vert \leq  C_B( \epsilon + n\exp(-3^{-1}c_f\epsilon^2n\g^{(-1/2)})).
\end{equation}
Because $L_n$ is the average of $L_i$, then also
\begin{equation}\label{eq:li_bound}
\vert L_n - \E(L_n) \vert < C_B( \epsilon + n\exp(-3^{-1}c_f\epsilon^2n\g^{(-1/2)})).
\end{equation}

\end{proof}

\begin{restatable}{lemma}{unifControl}
\label{lem:unifControl}
Under \Cref{assump:bounded} and~\ref{assump:contFunc}, with probability $1-o(1)$, we have simultaneously for all $\g \in G_n$
\begin{equation}
\left\vert L_n(\g) - \E[L_n(\g)] \right\vert \lesssim n^{-\beta/(2\beta+\vert \g \vert_0)}\log(n).
\end{equation}
This implies, for $\tilde \g =\arg\min_{\g \in G_n}L_n(\g)$, we have
\begin{equation}
\left\vert L_n(\tilde \g) - \E[L_n(\tilde \g)] \right\vert \lesssim n^{-\beta/(2\beta+\vert \tilde \g \vert_0)}\log(n) < n^{-\beta/(2\beta+T)}\log(n).
\end{equation}
\end{restatable}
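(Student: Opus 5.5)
The plan is to upgrade the pointwise concentration of \Cref{lem:loocv_concentration} to a uniform statement over $G_n$. I would use a discretization (net) argument combined with the Lipschitz control on $L_n$ from \Cref{lem:boundedDerivative}.

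\textbf{Step 1: stratify by support.} Since $G_n$ is a union over the $2^T$ possible supports $S=\{v:\gamma_v>0\}$ and $T$ is fixed, it suffices to work with one support $S$ with $|S|=s$ at a time. On that stratum $\gamma_v\le C_g n^{2/(2\beta+s)}$ for $v\in S$, so
\[
\g^{(-1/2)}\ge C_g^{-s/2}n^{-s/(2\beta+s)}
\qquad\text{and hence}\qquad
n\g^{(-1/2)}\gtrsim n^{2\beta/(2\beta+s)}.
\]
I would take $\epsilon_n = c\, n^{-\beta/(2\beta+s)}\log n$. Then $\epsilon_n^2 n\g^{(-1/2)}\gtrsim \log^2 n$, which dominates $\frac{3}{c_f}\log(n/\epsilon_n)$ for large $n$. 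So \Cref{lem:loocv_concentration} gives, for each fixed $\g$ in the stratum,
\[
|L_n(\g)-\E L_n(\g)|\le c_B\epsilon_n
\]
with failure probability at most $4n\exp(-c\log^2 n)$, which is superpolynomially small.

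\textbf{Step 2: net and Lipschitz extension.} The stratum sits inside the box $[0,C_g n^{2/(2\beta+s)}]^s$. I would cover this box with a grid of mesh $\eta_n=\epsilon_n/(16B^4 T)$; the number of grid points is polynomial in $n$. A union bound over the grid points and the $2^T$ strata still leaves failure probability $\mathrm{poly}(n)\,e^{-c\log^2 n}=o(1)$.

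To pass from the grid to all of the stratum, I would use \Cref{lem:boundedDerivative}. It gives $|\partial L_n/\partial\gamma_k|\le 16B^4$, so $L_n$ is $16B^4$-Lipschitz in $\ell_\infty$ up to a factor $T$. Differentiating under the expectation gives the same bound for $\E L_n$. Hence moving from any $\g$ to its nearest grid point changes both $L_n$ and $\E L_n$ by at most about $\epsilon_n$.

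\textbf{Main obstacle: the boundary of the stratum.} Grid points must stay inside the stratum, keeping $\gamma_v>0$ for $v\in S$. The difficulty is that $\g^{(-1/2)}$ jumps when a coordinate hits $0$. I would handle this by putting the grid on $[\eta_n, C_g n^{2/(2\beta+s)}]^s$. Any $\g$ with some $\gamma_v<\eta_n$ is then handled by Lipschitz comparison to the point with that coordinate set to zero, which lies in a lower-dimensional stratum. That stratum has a larger exponent $\beta/(2\beta+s')$, so its bound is even smaller; the comparison costs only $O(\epsilon_n)$ in the rate. The bound obtained is therefore $\lesssim n^{-\beta/(2\beta+|\g|_0)}\log n$ simultaneously over $G_n$, as claimed.

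\textbf{Second claim.} This follows by evaluating the uniform bound at $\tilde\g\in G_n$ and using $|\tilde\g|_0\le T$, since the exponent $\beta/(2\beta+s)$ is decreasing in $s$.
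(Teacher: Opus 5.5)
Your proposal is correct and follows the paper's proof essentially step for step. You decompose $G_n$ into the $2^T$ support-indexed boxes and use the gradient bound of \Cref{lem:boundedDerivative} to pass from a polynomial-size net to the whole box, for both $L_n$ and $\E L_n$. You then apply \Cref{lem:loocv_concentration} at deviation $n^{-\beta/(2\beta+s)}\log n$, so each net point fails with probability $O(n e^{-c\log^2 n})$, and finish with a union bound.

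The one difference is your boundary device, which is harmless but unnecessary. The paper works with closed boxes $G_{\I,n}$ that allow $\gamma_v=0$ for $v\in\I$. A zero coordinate contributes a factor $1\ge (C_g n^{2/(2\beta+s)})^{-1/2}$ to $\g^{(-1/2)}$, matching the product over $\gamma_v\neq 0$ in \Cref{assump:bounded}. Hence $n\g^{(-1/2)}\gtrsim n^{2\beta/(2\beta+s)}$ holds on the entire closed box, faces included, and no reduction to lower-dimensional strata is needed.
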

\begin{proof}
Fix some $\I \subseteq \mathcal{T}$, let $\vert \I \vert = s$, and let
\[G_{\I,n} = \{\g :\; 0 \leq \gamma_v \leq C n^{2/(2\beta+s)}\;\; \forall v \in \I \quad \text{ and } \quad \gamma_v = 0 \;\; \forall v \not \in \I\}.\]
By Lemma~\ref{lem:boundedDerivative} the derivative of $L_n(\gamma)$ is bounded by $16B^4$ so for fixed data, it is Lipschitz in $\g \in G_{\I, n}$ (with respect to $\vert \cdot \vert_2$) with constant $16B^4\sqrt{s}$; i.e.,
\[\vert L_n(\g) - L_n(\g')\vert \leq 16B^4\sqrt{s} \vert\g -\g' \vert_2. \]

Let $\varepsilon = n^{-\beta/(2\beta+s)} / (32B^4\sqrt{s})$ and $H_{\I, n}(\varepsilon)$ denote an $\varepsilon$-net of $G_{\I,n}$ (with respect to $\vert \cdot \vert_2$) so that 
\begin{equation}\label{eq:epsNetOracle}
\begin{aligned}
\max_{\g \in G_{\I, n}} \vert L_n(\g) - \E(L_n(\g))\vert &\leq \max_{\g \in H_{\I, n}(\varepsilon)} \vert L_n(\g) - \E(L_n(\g)) \vert + n^{-\beta/(2\beta+s)}.
\end{aligned}
\end{equation}
We now compute $\vert H_{\I,n}(\varepsilon) \vert$, the covering number for $G_{\I, n}$. Since $G_{\I,n}$ is a cube in $\mathbb{R}^{s}$, then the covering number for $G_{\I,n}$ is of order $(\sqrt{s}C_gn^{2/(2\beta +s)}/\varepsilon)^{s} \lesssim s^{s/2}C_g^s n^{2s/(2\beta+s)}n^{\beta s/(2\beta +s)} \lesssim n^{(2+\beta)s/(2\beta+s)}$. 

We then apply~\Cref{lem:loocv_concentration} with deviation $\zeta = n^{-\beta/(2\beta+s)}\log(n)$. Note that $\min_{\g \in G_{\I, n}}n\g^{(-1/2)} \asymp n^{1-2s/2(2\beta + s)} = n^{2\beta/(2\beta + s)}$.  Thus, for $n$ large enough, the inequality $ n^{2\beta/(2\beta+s)} > \frac{3}{c_f\zeta^2}\log(n/\zeta) = \frac{3n^{2\beta/(2\beta+s)}}{c_f\log^2(n)}\log(n^{(3\beta+s)/(2\beta+s)}\log(n))$ will be satisfied so that:
\begin{equation}\label{eq:mainBoundOracle}\small
\begin{aligned}
        P\left(\max_{\g \in  H_{\I,n}(\varepsilon)}\right.&\left.\vert L_n(\g)- \E\left[L_n( \g)\right] \vert \geq n^{-\beta/(2\beta+ s)}\log(n) \vphantom{\max_{\g \in  H_{\I,n}(\varepsilon)}} \right) \\
        &\leq \vert H_{\I,n}(\varepsilon)\vert \max_{\g \in  H_{\I,n}(\varepsilon)}4n\exp(-3^{-1}c_f (n^{-\beta/(2\beta+ s)}\log(n))^2 n\g^{(-1/2)})\\
        &\lesssim  n^{(2+\beta)s/(2\beta+s) + 1}\max_{\g \in  H_{\I,n}(\varepsilon)}\exp(-3^{-1}c_f (n^{-\beta/(2\beta+ s)}\log(n))^2 n\g^{(-1/2)})\\
        &\lesssim n^{s(3+\beta)/(2\beta+s)}\exp(- \log^2(n)).
\end{aligned}
\end{equation}
The second to last inequality comes from the fact that $\min_{\g \in H_{\I,n}}n\g^{(-1/2)} >n^{1-2s/(2(s+2\beta))} = n^{2\beta/(2\beta+s)}$. Combining \Cref{eq:epsNetOracle} and \Cref{eq:mainBoundOracle} implies that 
\begin{equation}\label{eq:epsNetFinal_oracle}
\begin{aligned}
\max_{\g \in G_{\I, n}} \vert L_n(\g) - \E(L_n(\g))\vert \geq n^{-\beta/(2\beta+ s)}\log(n)
\end{aligned}
\end{equation}
with probability $O(n^{s(3+\beta)/(2\beta+s)}\exp(- \log^2(n))$.

Note that $G_n = \cup_{\I \subseteq \mathcal{T}} G_{\I, n}$ and there are $2^T$ subsets $\I \subseteq T$. Thus, by taking a union bound over all $\I$, we have
\begin{equation}\label{eq:mainBoundOracleUnion}\small
\begin{aligned}
        P\left(\max_{\I \subseteq \mathcal{T}}\max_{\g \in  G_{\I,n}}\right.&\left.\vert L_n(\g)- \E\left[L_n( \g)\right] \vert \geq n^{-\beta/(2\beta+ s)}\log(n) \vphantom{\max_{\g \in  H_{\I,n}(\varepsilon)}} \right) \\
        &\lesssim 2^T\max_{\I \subseteq \mathcal{T}}\left[ n^{\vert \I\vert(3+\beta)/(2\beta+\vert \I\vert)}\exp(- \log^2(n))\right]\\
        &\leq 2^T \left[ n^{T(3+\beta)/(2\beta+1)}\exp(- \log^2(n))\right].
\end{aligned}
\end{equation}

Since $T$ is fixed, the RHS of~\Cref{eq:mainBoundOracleUnion} is $o(1)$. This yields the desired result: we have with probability at $1-o(1)$ that for all $\g \in G_{n}$,
\begin{equation}\label{eq:epsNetFinal_oracle}
\begin{aligned}
\vert L_n(\g) - \E(L_n(\g))\vert \lesssim n^{-\beta/(2\beta+ \vert \g \vert )}\log(n). 
\end{aligned}
\end{equation}
    
\end{proof}

\modelSelect*
\begin{proof}
Let $Q_n(\g) = L_n(\g; \tilX, \Y) + \lambda_n \sum_v\hat w_v \g_v$. We now show that the global optimum
\begin{equation}
\arg\min_{\g\in G_{n}} Q_n(\g) = \hat \gamma^\star \in G_{n} \setminus \check G_{n}
\end{equation}
with probability $1- o(1)$. Specifically, we show that with probability $1-o(1)$, there exists some $\g^{\dagger,n} \in G_{n} \setminus \check G_{n}$ such that $Q_n(\g^\dagger) < \sigma^2 + \delta/2$ and that $\min_{\g \in \check G_{n}}Q_n(\g) \geq \sigma^2 + \delta/2$. Thus, the global minimizer must be in $G_{n} \setminus \check G_{n}$.

\textbf{Existence of $\g^{\dagger,n} \in G_{n} \setminus \check G_{n}$ such that $Q_n(\g^{\dagger,n}) < \sigma^2 + \delta/2$.}

For any $k \in \Istar$, we have $\partial_j m \neq 0$ and $\partial_j m = \partial_j' m$  for all $j,j' \in \leaves(k)$
and there exists $s \in \bigcup_{j \in \sib(k)}\leaves(j)$ such that 
$\partial_j m \neq \partial_s m$ for some $j \in \leaves(k)$. Thus, $C_{k,3} = O(n^{-a_1})$, $C_{k,2} = O(1)$, and $C_{k,1} = O(1)$  so $\hat{w}_{k} = O(n^{(a_2-a_1)b})$ and $\lambda_n \hat w_k = O(n^{(a_2-a_1)b - d})$. 
Let $\g^{\dagger,n}$ denote a sequence of parameters where $\g^{\dagger,n}_k = n^r$ for some
$  0 <r < \min(2 / (2\beta + \vert \Istar\vert), d - \left(a_2 - a_1\right)b) $ if
$k \in \Istar$ and $\g^{\dagger,n}_{k'} = 0$ if $k' \in \mathcal{T} \setminus \Istar$. Thus,
$\lambda_n \sum_{v}\hat{w}_{v}\gamma^{\dagger, n}_v \asymp \vert \Istar\vert n^{(a_2-a_1)b-d +r}$.
Since $(a_2-a_1)b-d +r <0$,
the regularization term $\lambda_n \sum_{t=1}^T\hat{w}_{t,n}\gamma^\dagger_{t,n}\rightarrow 0$ as $n\rightarrow \infty$. Therefore, for large enough $n$, with probability $1-o(1)$ we have $\lambda_n \sum_{k}\hat{w}_{k,n}\gamma^{\dagger, n}_k \leq \delta/16$.

Under~\Cref{assump:contFunc}, $\lim_{n\rightarrow \infty} \E(L_n(\g^{\dagger,n})) = \E((Y - \E(Y_i\mid P_{\Istar}(X_i)))^2) = \sigma^2$, so for large enough $n$ we have that $\E(L_n(\g^{\dagger,n})) \leq \sigma^2 + \delta/16$. Then, by~\Cref{lem:unifControl}, we have with probability $1-o(1)$ that
\begin{equation}
\begin{aligned}
 Q_n(\g^{\dagger,n}) &= \E(L_n(\g^{\dagger,n})) +  (L_n(\g^{\dagger,n}) - \E(L_n(\g^{\dagger,n}))) + \lambda_n\sum_{k}\hat{w}_{k}\gamma^{\dagger, n}_k \\
 &\leq \sigma^2 + \delta/16 + \delta/4 + \delta/16 < \sigma^2 + \delta/2.         
\end{aligned}
\end{equation}

\textbf{With probability $1-o(1)$, we have $\min_{\g \in \check G_n}Q_n(\g) \geq \sigma^{2} + \delta/2$}.

For any $\check{\g} \in \check G_n$ let $\I = \{v : \check \g_v \neq 0\}$. Then,
\begin{equation}
    \begin{aligned}
        \E\left[(L_n(\check \g)\right] &= \E([Y_1 - \hat m_{-1}(X_1; \check{\g})]^2)\\
        &=\E_{\{X_i, Y_i\}_{i \neq 1}}\left[\E_{X_1, Y_1}\left([Y_1 - \hat m_{-1}(P_\mathcal{I}(X_1); \check{\g})]^2 \mid \{X_i, Y_i\}_{i \neq 1} \right)  \right]\\
         &\leq\E_{\{X_i, Y_i\}_{i \neq 1}}\left[\E_{X_1, Y_1}\left([Y_1 - \E(Y_1 \mid \check P_\mathcal{I}(X_1))]^2 \mid \{X_i, Y_i\}_{i \neq 1} \right)  \right]\\
        &= \E_{X_1, Y_1}([Y_1 - \E(Y_1 \mid \mathcal{P}_\I(X_1))]^2) \\
        &\geq \sigma^2 + \delta.\\
    \end{aligned}
\end{equation}

The first equality comes from symmetry of $L_n$. The second line can be seen by noting $\check m_{-1}(X_1; \check g)$ only depends on $\mathcal{P}_\I(X_1)$. The third lines comes from the property of the conditional mean, the penultimate line comes from the fact that each observation is independent, and the last inequality comes from \Cref{assump:seperation}.

Thus, by \Cref{lem:unifControl}, we have with probability $1-o(1)$, we have $\max_{\g \in \check G_{n}} \vert L_n(\check \g) - \E\left[(L_n(\check \g)\right] \vert < \delta/2$. Thus,
\begin{align}
    \min_{\g \in \check G_{n}} Q_n(\g) &\geq \min_{\g \in \check G_{n}} \E\left[(L_n(\check \g)\right] - \max_{\g \in \check G_{n}} \vert L_n(\check \g) - \E\left[(L_n(\check \g)\right] \vert \geq \sigma^2 - \delta/2.
    \end{align}

Thus $\hat \g \in G_{n} \setminus \check G_{n}$ so $\hat \I$ is not a strict non-super-model of $\Istar$ with probability $1- o(1)$. Combined with~\Cref{lem:noFalsePos} this implies $\hat \I = \Istar$. 
\end{proof}

\section{Implementation Details}
\label{app:implementation}

\begin{algorithm}[H]
\small
\caption{Details for KR-TEXAS implementation}
\label{alg:krtexas-practical}
\begin{algorithmic}[1]
\Require Data $(\bm X,Y)$, tree matrix $\bm A$, tuning grid $\Lambda$ for $\lambda_n$;
         numbers of random restarts $R_1$ (Step A), $R_2$ (Step B);
         choice of pilot method (\texttt{NW\_ML}, \texttt{LLR}, or \texttt{LQR});
         choice of distance for penalties ($\ell_2$ or $\ell_1$)
\Ensure Final vector $\hat{\bm \gamma}$, selected $\hat\lambda_n$
\State \textbf{Step A: Adaptive Weight Construction}
\State \textbf{Leaf–level metric learning at $\lambda=0$}
\For{$r = 1,\ldots,R_1$}  \Comment{random restarts}
  \State Initialize $\bm \gamma^{(r)}$ using a randomized strategy (e.g. ``large'' vs ``small'' scale)
  \State Run L--BFGS--B to (approximately) solve \eqref{eq:kr_texas_opt} with
         $\bm A = \bm I_p$, $\lambda_n = 0$, and $\hat {\bm w}=1$.
  \State Record the solution $\bm \gamma^{(r)}_0$ and its training loss
\EndFor
\State Let $\check{\bm \gamma}^{\text{leaf}}$ be the best $\bm {\gamma}^{(r)}_0$ (smallest training loss)
\State \textbf{Oversmoothing and interior points}
\State Set oversmoothed metric $\tilde{\bm \gamma}^{\text{leaf}} \gets (\check{\bm \gamma}^{\text{leaf}})^z$ with $z\in(0,1)$.
\State For each $i\in[n]$, compute kernel neighborhood mass
       $K_i = \sum_{j\neq i} K_{\tilde{\bm \gamma}^{\text{leaf}}}(X_i,X_j)$
\State Let $\mathcal{J}$ be the set of $m = \lfloor n/10\rfloor$ indices with the largest $K_i$
       (interior points)
\State \textbf{Pilot derivative estimation on leaves}
\If{pilot method = \texttt{LLR} or \texttt{LQR}}
  \For{each $i\in[n]$}
    \State Solve the local linear problem \eqref{eq:leaf_opt_llr} or local quadratic problem
           centered at $X_i$ with kernel $K_{\tilde{\bm \gamma}^{\text{leaf}}}$ to obtain
           $\hat\beta_i = (\hat\beta_{i,1},\ldots,\hat\beta_{i,p})$
           as an estimate of $\{\partial_v m(X_i)\}_{v=1}^p$
  \EndFor
\ElsIf{pilot method = \texttt{NW\_ML}}
  \State Use the KR--TEXAS fit with metric $\tilde{\bm \gamma}^{\text{leaf}}$ to compute
         gradients $\widehat{\partial_v m}(X_i)$ w.r.t. leaf features
\EndIf
\end{algorithmic}
\end{algorithm}

\begin{algorithm}[H]
\caption{Details for KR-TEXAS implementation (Continued)}
\begin{algorithmic}[1]
    \State \textbf{Adaptive penalty components $C_{v,1},C_{v,2},C_{v,3}$}
\For{each tree node $v \in \mathcal{T}$}
  \State Use only interior points $i\in\mathcal{J}$ to compute
         \begin{align*}
           C_{v,1} &\gets \binom{|\leaves(v)|}{2}^{-1}
             \sum_{u,w\in\leaves(v)} \frac{1}{|\mathcal{J}|}
             \sum_{i\in\mathcal{J}} d\!\left(\widehat{\partial_u m}(X_i), \widehat{\partial_w m}(X_i)\right),\\
           C_{v,2} &\gets \frac{1}{|\leaves(v)|}
             \sum_{u\in\leaves(v)} \frac{1}{|\mathcal{J}|}
             \sum_{i\in\mathcal{J}} d\!\left(\widehat{\partial_u m}(X_i), 0\right),\\
           C_{v,3} &\gets \frac{1}{|\leaves(v)|\,|\bigcup_{k\in\sib(v)}\leaves(k)|}
             \sum_{u\in\leaves(v)}\sum_{k\in\sib(v)}\sum_{w\in\leaves(k)}
             \frac{1}{|\mathcal{J}|}\sum_{i\in\mathcal{J}}
             d\!\left(\widehat{\partial_u m}(X_i),\widehat{\partial_w m}(X_i)\right),
         \end{align*}
         where $d(a,b)=|a-b|$ for $\ell_1$ or $d(a,b)=(a-b)^2$ for $\ell_2$.
    \State $w_v \gets (n^{a_2} C_{v,1})^b + C_{v,2}^{-b} + C_{v,3}^{-b}$
\EndFor

\State \textbf{Step B: Penalized fitting with adaptive weights} 
\State Split $[n]$ into $k = 1\ldots K$ folds with train/test sets denoted as $D_{train}^{(k)}$ and $D_{test}^{(k)}$ 
\For{$m \in [R_2]$}
\For{$\lambda \in \Lambda$ and $k = 1:K$}
        \State Use random initialization then warm starts to get a local optimum $\hat \g_{\lambda,m}^{(k)}$ of
        \[
        %\mathcal{L}(\g; \tilX_{D_{train,m}^{(k)}}, \Y_{D_{train,m}^{(k)}},\lambda)= 
        \frac{1}{\vert D_{train}^{(k)}\vert}\sum_{i\in D_{train}^{(k)}} (Y_i - \hat m_{-i}(\tilde X_i; \tilX_{D_{train}^{(k)}}, \Y_{D_{train}^{(k)}}, \g))^2
        + \lambda \sum_{v=1}^T \hat w_v \g_v.
        \]
        using L--BFGS--B with the spred factorization $\bm \gamma = \bm u\odot \bm w$ and constraint $\bm (\bm u, \bm w) \geq 0$
\EndFor
\EndFor
\State Select $(\lambda^\star, m^\star) = \arg \min_{\lambda \in \Lambda, m} \sum_{k=1}^K \sum_{i \in D_{test,m}^{(k)}}(Y_i - \hat m_{-i}(\tilde X_i; \tilX_{D_{train}^{(k)}}, \Y_{D_{train}^{(k)}}, \hat\g_{\lambda, m}^{(k)}))^2$

\State Use $m^\star$ random initialization to get local optimum $\hat \g$ of
\[
        \mathcal{L}(\g; \tilX, \Y,\lambda^\star)
        = \frac{1}{n}\sum_{i\in [n]} (Y_i - \hat m_{-i}(\tilde X_i; \tilX, \Y, \g))^2
        + \lambda^\star \sum_{v=1}^T \hat w_v \g_v.
\]
\State \Return $(\hat{\bm\gamma},\lambda^*)$ 
\end{algorithmic}
\end{algorithm}

We now describe the full set of input parameters for the \texttt{krtexas\_fit} function.
\begin{itemize}
    \item \{\texttt{X, Y, A}\}. \texttt{X, Y,} and \texttt{A} correspond to $\bm X, Y,$ and $\bm A$ as defined in Section \ref{sec:methods:model}.  
    \item  \{\texttt{nfolds, lambda, min\_lambda, max\_lambda, nlambda}\}. \texttt{nfolds} determines the number of cross-validation folds for Step B in \Cref{alg:krtexas_fit_overall}. The user can set the \texttt{lambda} parameter to fit KR-TEXAS (\Cref{eq:kr_texas_opt}) with a fixed, predetermined $\lambda$. If \texttt{lambda} is not specified, the user can specify the range over which the algorithm will use \texttt{nfolds}-cross-validation to find the $\lambda$ value with the lowest corresponding loss by specifying \texttt{min\_lambda, max\_lambda,} and \texttt{nlambda}. \texttt{min\_lambda} and \texttt{max\_lambda} determine the (inclusive) lower and upper bound of the cross-validation range while \texttt{nlambda} specifies how many values are evaluated within that range.
    \item \{\texttt{eps}\}. \texttt{eps} is the epsilon for convergence test in LBFGS \citep{lbfgs2022}. It sets the convergence criterion for KR-TEXAS as in line 5 of \Cref{alg:krtexas_fit_overall}.
    \item \{\texttt{silent, parallel, n\_cores, warm\_start}\}. The \texttt{silent} parameter silences the console output from the LBFGS optimization. If the \texttt{parallel} parameter is set to \texttt{TRUE}, the cross validation folds are run in parallel, with the number of cores used specified by \texttt{n\_cores}. \texttt{warm\_start} is a logical parameter that allows for the user to specify whether to use ``warm starts'' in the cross-validation step to find a suitable $\lambda$. When set to \texttt{TRUE}, the average of the $\hat{\bm \gamma}$ values from the previous candidate $\lambda$ value are used to initialize the optimizations for the current candidate $\lambda$.
    \item \{\texttt{method, gamma\_init\_strat}\}. The \texttt{method} parameter takes on values of \texttt{``NW\_ML'', ``LLR''}, and \texttt{``LQR''}, which select NW with metric learning, local linear regression, and local quadratic regression as the pilot estimator, respectively (line 1 in \Cref{alg:krtexas_fit_overall}). \texttt{gamma\_init\_strat} initializes each element of $u$ and $w$ (defined in \Cref{sec:spred}) with $N(\mu = 1, \sigma = 1/4)$ if set to ``small'', $N(\mu = \max (1, n^{2/(4+T)}), \sigma = 1)$ if set to ``large'', and $N(\mu = 0.1, \sigma = 0.01)$ if set to ``smallest''.
    \item \{\texttt{distance, num\_restarts\_stage\_1, num\_restarts\_stage\_2}\}. In \Cref{eq:weights}, the adaptive weights are defined using $\ell_2$ distance, which corresponds to \texttt{distance = "L2"}, but we also provide optionality for using $\ell_1$ distance by selecting \texttt{distance = "L1"}. In step 1 of \Cref{alg:krtexas_fit_overall}, \Cref{eq:leaf_opt} is solved \texttt{num\_restarts\_stage\_1} times, with $1/3$ ``smallest'' initializations, $1/3$ ``small'' initializations, and $1/3$ ``large'' initializations. The solution vector with the lowest loss among all restarts is used in the adaptive weight construction. Similarly, \texttt{num\_restarts\_stage\_2} determines the number of random restarts conducted with the selected $\lambda$ value to solve \Cref{eq:kr_texas_opt} (line 8 of \Cref{alg:krtexas_fit_overall}). $1/3$ of the initializations are ``smallest'', $1/3$ are ``small'', and $1/3$ initializations are ``large''.
    Furthermore, in the software implementation, we set tuning parameters in \Cref{eq:weights} such that $a_2 \gets 1/(2(2+p))$, and $b\gets 1$.
    \item \{\texttt{max\_attempts\_stage\_3}\} In the case that the optimization in line 14 of \Cref{alg:krtexas_fit_overall} does not converge, it is retried up to \texttt{max\_attempts\_stage\_3} times using a random initialization.
\end{itemize}

\subsubsection{\textbf{spred implementation}}
\label{sec:spred}
 Instead of directly optimizing the $\gamma$ vector, we use a factorized representation proposed by~\citep{ziyin2023spred}:
\[\bm \gamma = u \odot w,\]
where $u$ and $w$ are vectors of length $T$, and $\odot$ denotes Hadamard (element-wise) multiplication.

\begin{itemize}
\item The optimization is performed over the combined parameter vector $\theta = [u, w]$ of length $2T$, rather than directly over $\bm \gamma$ of length $T$.

\item The spred penalty is applied to both $u$ and $w$ components:
\[
\text{spred penalty} = \kappa(\|u\|_2^2 + \|w\|_2^2),
\]
where $\kappa = \lambda/2$.

\item The gradients with respect to the original parameters are computed using the chain rule:
\begin{align*}
\frac{\partial L}{\partial u} &= \frac{\partial \mathcal{L}}{\partial \gamma} \odot w \\
\frac{\partial \mathcal{L}}{\partial w} &= \frac{\partial \mathcal{L}}{\partial \gamma} \odot u.
\end{align*}

% \item The \texttt{Matrix} package in R is used to efficiently compute internal node values by converting the matrix \texttt{A} into a sparse matrix representation: \texttt{A\_sparse <- Matrix(A, sparse = TRUE)}. The product is then computed as \texttt{Ax <- as.matrix(A\_sparse \%*\% t(X))}.
\end{itemize}

% The benefit of implementing $spred$ rather than optimizing over $\gamma$ directly is that the factorization can help avoid numerical issues that arise from direct optimization of $\gamma$.

\newpage

\section{Details for Numerical Experiments}
\label{app:NumericalExpDetails}

The following parameters are used in \Cref{sec:numerics}: \texttt{nfolds=5}, \texttt{nlambda=10}, \texttt{eps=1e-6}, \texttt{num\_restarts\_stage\_1 = 30}, \texttt{num\_restarts\_stage\_2 = 30}, \texttt{max\_attempts\_stage\_3 = 10}. For RARE, we use the settings \texttt{lam.min.ratio = 1e-6}, \texttt{nlam = 20}, \texttt{nalpha = 10}, \texttt{rho=0.01}, \texttt{eps1=1e-6}, \texttt{eps2=1e-6}, and \texttt{maxite=1e4}. The lasso utilizes \texttt{nfolds=5}, and is fit using the \texttt{glmnet} R package. The bandwidth for the Nadaraya–Watson kernel regression is chosen via cross-validated grid search for 

$\sigma \in \{0.001, 0.005, 0.01, 0.02, 0.05, 0.1, 0.2, 0.5, 1, 10, 10^2, 10^3, 10^4, 10^5, 10^6\}$.

For a one-to-one comparison, we use the $\gamma$ parameter in RARE to represent variable selection. If $\gamma_t$ is non-zero, then we denote variable $t$ as selected. These $\gamma$ parameters dictate the aggregation level of the final $\beta$ coefficients, and this gives RARE the most fair comparison when comparing to the method proposed in this work. 

% In the SCFA data analysis Section \ref{sec:analysis}, the parameters for the implementation of KR-TEXAS are as follows: 
% \texttt{nfolds}=4, \texttt{nlambda}=10, \texttt{eps}=1e-8, \texttt{gamma\_init\_strat} = ``small''.
The prediction performance of all methods across all sample size and covariance settings is given in \Cref{fig:rmse_all}.

\begin{figure}
    \centering
\includegraphics[width=\linewidth]{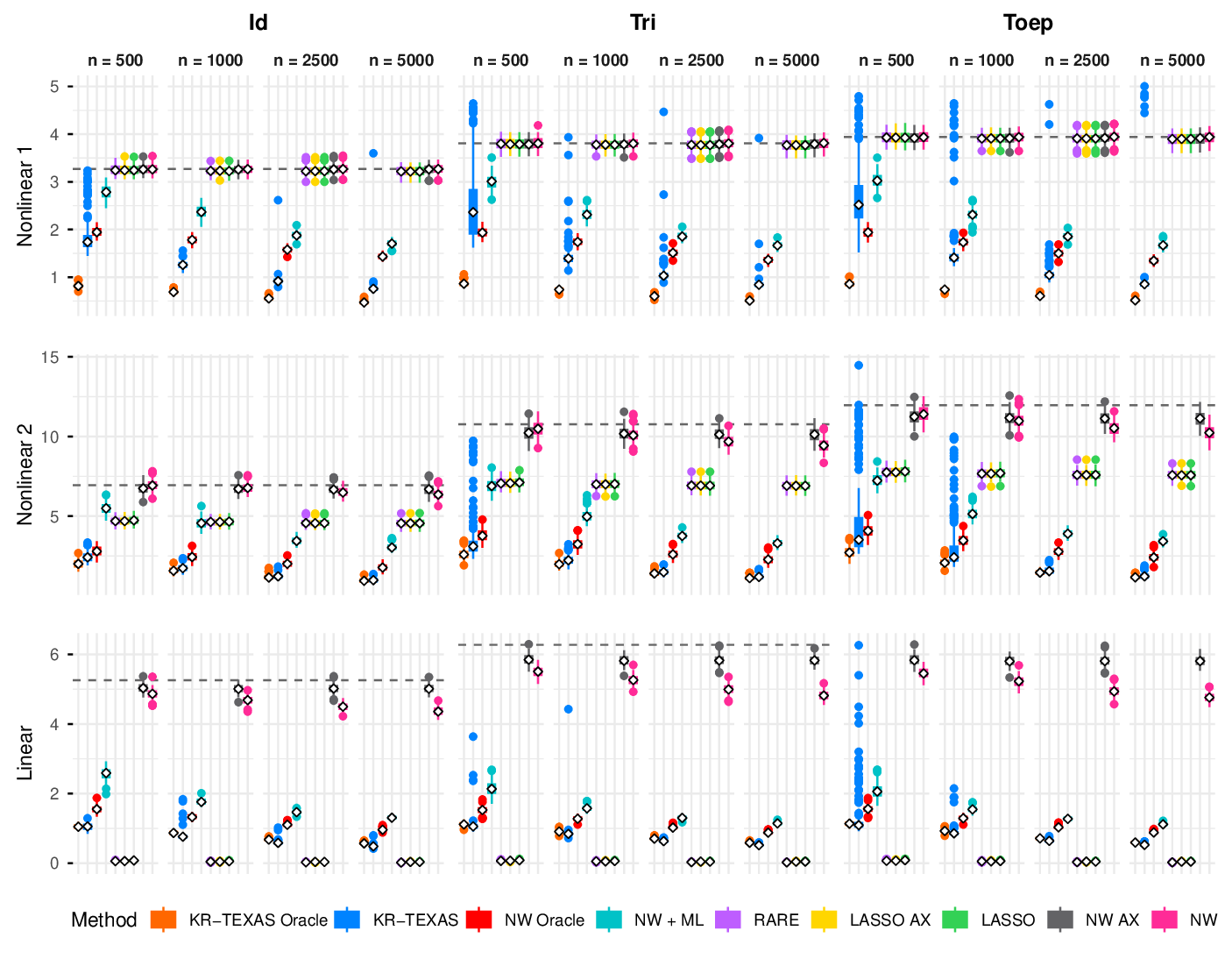}
    \caption{Prediction performance of all methods across covariance and sample size settings.}
    \label{fig:rmse_all}
\end{figure}

The variable selection performance of all methods across all sample size and covariance settings is given in \Cref{tab:allCasesTab}.

\begin{table}[ht]
\caption{Variable Selection: All Cases}
\centering
\begin{tabular}[t]{ccc|@{\,}r@{\,\,}r@{\,\,}r@{\,\,}r@{\,}|@{\,}r@{\,\,}r@{\,\,}r@{\,\,}r@{\,}|@{\,}r@{\,\,}r@{\,\,}r@{\,\,}r@{\,}|@{\,}r@{\,\,}r@{\,\,}r@{\,\,}r@{\,}}
& & & \multicolumn{4}{c}{$n=500$} & \multicolumn{4}{c}{$n=1000$} & \multicolumn{4}{c}{$n=2500$} & \multicolumn{4}{c}{$n=5000$} \\
& & & SN & SP & Prec & NPV & SN & SP & Prec & NPV & SN & SP & Prec & NPV & SN & SP & Prec & NPV \\
\midrule
\multirow{9}{*}{\rotatebox{90}{Nonlinear 1}} & \multirow{3}{*}{\rotatebox{90}{Id}} & KRT & \textbf{1.00} & .35 & .03 & \textbf{1.00} & \textbf{1.00} & .61 & .05 & \textbf{1.00} & \textbf{1.00} & .72 & .07 & \textbf{1.00} & \textbf{1.00} & .87 & .13 & \textbf{1.00} \\
\cline{3-19}
 &  & LASSO & .19 & \textbf{.98} & \textbf{.15} & .98 & .22 & \textbf{.98} & \textbf{.20} & .98 & .23 & \textbf{.98} & \textbf{.18} & .98 & .23 & \textbf{.98} & \textbf{.17} & .98 \\
\cline{3-19}
 &  & RARE & .17 & .97 & .11 & .98 & .22 & .97 & .14 & .98 & .22 & .97 & .14 & .98 & .22 & .97 & .12 & .98 \\
\cline{2-19}
 & \multirow{3}{*}{\rotatebox{90}{Tri}} & KRT & \textbf{.85} & .56 & .04 & \textbf{.99} & \textbf{1.00} & .63 & .05 & \textbf{1.00} & \textbf{1.00} & .66 & .06 & \textbf{1.00} & \textbf{1.00} & .93 & \textbf{.22} & \textbf{1.00} \\
\cline{3-19}
 &  & LASSO & .15 & \textbf{.98} & \textbf{.14} & .98 & .21 & \textbf{.98} & \textbf{.18} & .98 & .23 & \textbf{.98} & \textbf{.17} & .98 & .22 & \textbf{.98} & .17 & .98 \\
\cline{3-19}
 &  & RARE & .14 & .97 & .09 & .98 & .20 & \textbf{.98} & .15 & .98 & .22 & .97 & .13 & .98 & .22 & \textbf{.98} & .17 & .98 \\
\cline{2-19}
 & \multirow{3}{*}{\rotatebox{90}{Toep}} & KRT & \textbf{.77} & .62 & .04 & \textbf{.99} & \textbf{1.00} & .59 & .05 & \textbf{1.00} & \textbf{1.00} & .69 & .06 & \textbf{1.00} & \textbf{1.00} & .90 & .17 & \textbf{1.00} \\
\cline{3-19}
 &  & LASSO & .15 & \textbf{.98} & \textbf{.13} & .98 & .20 & \textbf{.98} & \textbf{.17} & .98 & .22 & \textbf{.98} & \textbf{.17} & .98 & .22 & \textbf{.98} & \textbf{.18} & .98 \\
\cline{3-19}
 &  & RARE & .14 & .97 & .10 & .98 & .18 & \textbf{.98} & .14 & .98 & .21 & .97 & .12 & .98 & .21 & \textbf{.98} & .17 & .98 \\
\midrule
\multirow{9}{*}{\rotatebox{90}{Nonlinear 2}} & \multirow{3}{*}{\rotatebox{90}{Id}} & KRT & \textbf{.95} & .40 & .03 & \textbf{1.00} & \textbf{.92} & .53 & .04 & \textbf{1.00} & \textbf{.96} & .54 & .04 & \textbf{1.00} & \textbf{.98} & .55 & .04 & \textbf{1.00} \\
\cline{3-19}
 &  & LASSO & .59 & \textbf{.96} & \textbf{.23} & .99 & .65 & \textbf{.96} & \textbf{.23} & .99 & .69 & \textbf{.96} & \textbf{.25} & .99 & .74 & \textbf{.96} & \textbf{.26} & .99 \\
\cline{3-19}
 &  & RARE & .56 & \textbf{.96} & .21 & .99 & .65 & .94 & .18 & .99 & .69 & .94 & .19 & .99 & .75 & .94 & .21 & .99 \\
\cline{2-19}
 & \multirow{3}{*}{\rotatebox{90}{Tri}} & KRT & \textbf{.87} & .50 & .03 & \textbf{.99} & \textbf{.88} & .57 & .04 & \textbf{1.00} & \textbf{.86} & .63 & .04 & \textbf{1.00} & \textbf{.86} & .64 & .05 & \textbf{1.00} \\
\cline{3-19}
 &  & LASSO & .50 & \textbf{.96} & \textbf{.22} & \textbf{.99} & .55 & \textbf{.96} & \textbf{.22} & .99 & .63 & \textbf{.95} & \textbf{.20} & .99 & .66 & \textbf{.95} & \textbf{.19} & .99 \\
\cline{3-19}
 &  & RARE & .48 & \textbf{.96} & .18 & \textbf{.99} & .51 & .95 & .18 & .99 & .60 & .94 & .17 & .99 & .65 & .91 & .12 & .99 \\
\cline{2-19}
 & \multirow{3}{*}{\rotatebox{90}{Toep}} & KRT & \textbf{.88} & .45 & .03 & \textbf{.99} & \textbf{.86} & .54 & .04 & \textbf{.99} & \textbf{.87} & .63 & .05 & \textbf{1.00} & \textbf{.85} & .65 & .05 & \textbf{1.00} \\
\cline{3-19}
 &  & LASSO & .48 & \textbf{.96} & \textbf{.21} & \textbf{.99} & .53 & \textbf{.96} & \textbf{.22} & \textbf{.99} & .61 & \textbf{.95} & \textbf{.20} & .99 & .64 & \textbf{.95} & \textbf{.19} & .99 \\
\cline{3-19}
 &  & RARE & .47 & \textbf{.96} & .19 & \textbf{.99} & .49 & \textbf{.96} & .18 & \textbf{.99} & .59 & \textbf{.95} & .19 & .99 & .64 & .92 & .13 & .99 \\
\midrule
\multirow{9}{*}{\rotatebox{90}{Linear}} & \multirow{3}{*}{\rotatebox{90}{Id}} & KRT & \textbf{.81} & .60 & .04 & .99 & \textbf{.80} & .85 & .10 & \textbf{1.00} & \textbf{.80} & .70 & .05 & .99 & \textbf{.80} & .64 & .04 & .99 \\
\cline{3-19}
 &  & LASSO & .80 & \textbf{.94} & \textbf{.21} & \textbf{1.00} & \textbf{.80} & \textbf{.94} & \textbf{.20} & \textbf{1.00} & \textbf{.80} & \textbf{.99} & \textbf{.56} & \textbf{1.00} & \textbf{.80} & \textbf{.99} & \textbf{.75} & \textbf{1.00} \\
\cline{3-19}
 &  & RARE & .80 & .91 & .15 & \textbf{1.00} & \textbf{.80} & \textbf{.94} & \textbf{.20} & \textbf{1.00} & \textbf{.80} & .91 & .15 & \textbf{1.00} & \textbf{.80} & .94 & .22 & \textbf{1.00} \\
\cline{2-19}
 & \multirow{3}{*}{\rotatebox{90}{Tri}} & KRT & \textbf{.83} & .71 & .05 & \textbf{1.00} & \textbf{.81} & .81 & .08 & \textbf{1.00} & \textbf{.82} & .58 & .04 & .99 & \textbf{.81} & .53 & .03 & .99 \\
\cline{3-19}
 &  & LASSO & .80 & \textbf{.94} & \textbf{.21} & \textbf{1.00} & .80 & \textbf{.94} & \textbf{.20} & \textbf{1.00} & .80 & \textbf{.99} & \textbf{.55} & \textbf{1.00} & .80 & \textbf{.99} & \textbf{.71} & \textbf{1.00} \\
\cline{3-19}
 &  & RARE & .80 & .92 & .17 & \textbf{1.00} & .80 & \textbf{.94} & \textbf{.20} & \textbf{1.00} & .80 & .93 & .19 & \textbf{1.00} & .80 & .94 & .20 & \textbf{1.00} \\
\cline{2-19}
 & \multirow{3}{*}{\rotatebox{90}{Toep}} & KRT & \textbf{.82} & .72 & .06 & \textbf{1.00} & \textbf{.83} & .80 & .08 & \textbf{1.00} & \textbf{.82} & .57 & .04 & .99 & \textbf{.82} & .53 & .03 & .99 \\
\cline{3-19}
 &  & LASSO & .80 & \textbf{.94} & \textbf{.21} & \textbf{1.00} & .80 & \textbf{.94} & .20 & \textbf{1.00} & .80 & \textbf{.99} & \textbf{.53} & \textbf{1.00} & .80 & \textbf{.99} & \textbf{.70} & \textbf{1.00} \\
\cline{3-19}
 &  & RARE & .80 & .92 & .18 & \textbf{1.00} & .80 & \textbf{.94} & \textbf{.21} & \textbf{1.00} & .80 & .93 & .19 & \textbf{1.00} & .80 & .94 & .20 & \textbf{1.00} \\
\hline
\end{tabular}
\label{tab:allCasesTab}
\end{table} 

\subsection{Incorporating Additional Variables Not Present in the Tree}
\label{app:nonTreeVariables}

To incorporate a variable that is not included in the tree, we can simply add it as a leaf node by augmenting the $A$ matrix with a row and column of zeros and then setting the ($T+1,T+1$) element element to $1.$ This is illustrated below with a toy example.

\[\bm A_{\text{original}} = 
\begin{bmatrix}
1 & 0 & 0 \\
0 & 1 & 0 \\
0 & 0 & 1 \\
1 & 0 & 1
\end{bmatrix} \text{ and } \bm A_{\text{variable-added}} = 
\begin{bmatrix}
1 & 0 & 0 &0\\
0 & 1 & 0 &0\\
0 & 0 & 1 &0 \\
1 & 0 & 1 &0\\
0 & 0 & 0 & 1\\
\end{bmatrix}.
\]

Thus, adding $c$ non-tree variables can be expressed compactly by augmenting $\bm A_{\text{original}}$ as follows:

\[
\bm A_{\text{with-}c\text{-variables}} =
\begin{bmatrix}
\bm A_{\text{original}} & \bm 0 \\
\bm 0 & I_c
\end{bmatrix},
\]

where $\bm 0$ denotes a matrix of zeros of appropriate size, and $I_c$ is the $c \times c$ identity matrix corresponding to the newly added variables.

\FloatBarrier

\newpage
\section{Data Analysis details}
\label{app:dataAnalysis}

We use the \texttt{curatedMetagenomicData} R/Bioconductor package \citep{cmd2017} to obtain metagenomic profiles and associated metadata. Starting from the \texttt{sampleMetadata} object, we restrict to stool samples from the control group of studies:
\begin{itemize}
  \item \texttt{study\_condition == "control"},
  \item \texttt{body\_site == "stool"}.
\end{itemize}
For participants with repeated observations, we retain only the earliest visit by grouping on (\texttt{study\_name}, \texttt{subject\_id}) and selecting the smallest visit number. We then keep only studies with more than $1{,}000$ qualifying samples, which yields the AsnicarF~2021 and LifeLinesDeep~2016 studies \citep{asnicar2021microbiome, zhernakova2016population}. 

To construct short-chain fatty acid (SCFA) scores, we use HUMAnN3 \citep{humann3} pathway abundance profiles using \texttt{returnSamples(metadata, dataType = "pathway\_abundance")}. This yields a pathway-by-sample matrix, which we align to the sample order in the filtered metadata.

We filter the pathways by selecting rows whose names contain any of the following substrings:
\[
\text{``butyr''},\ \text{``butanoate''},\ \text{``propionat''},\ \text{``propanoate''},\ \text{``acetat''}
.\] For each sample, we compute the aggregate SCFA pathway abundance as the sum of the selected pathway abundances after adding a small pseudocount of $10^{-6}$ to avoid zeros:
\[
Y_{\text{SCFA},i} = \sum_{r \in \mathcal{P}_{\text{SCFA}}} \bigl( \text{PA}_{r i} + 10^{-6} \bigr),
\]
where $\text{PA}_{r i}$ denotes the HUMAnN3 pathway abundance for pathway $r$ and sample $i$, and $\mathcal{P}_{\text{SCFA}}$ is the set of SCFA-related pathways.

We then apply a log transform,
\[
Y^{\log}_{\text{SCFA},i} = \log\bigl(1 + Y_{\text{SCFA},i}\bigr),
\]
and standardize across samples to obtain a $z$-scored SCFA outcome,
\[
Y_i = \frac{Y^{\log}_{\text{SCFA},i} - \bar Y^{\log}_{\text{SCFA}}}{\widehat{\mathrm{sd}}(Y^{\log}_{\text{SCFA}})}.
\]
This standardized log-sum $Y_i$ is used as the response in the analysis.

For the covariates, we extract species-level relative abundance profiles using
\texttt{returnSamples(metadata, dataType = "relative\_abundance")}, which returns a feature-by-sample matrix of MetaPhlAn \citep{metaphlan2012} taxonomic profiles. We align columns to match the filtered metadata and restrict to species-level features by retaining only rows whose names contain the species pattern
\texttt{"(\^{}|\textbackslash|)s\_\_"},
corresponding to taxa annotated at the species level.

Samples with zero total species abundance are removed. For the remaining samples, we renormalize the species abundances to sum to one within each sample, so that they can be interpreted as species-level compositions:
\[
 X^\dagger_{j i} = \frac{X_{j i}}{\sum_{j'} X_{j' i}},
\]
where $X_{j i}$ denotes the original relative abundance of species $j$ in sample $i$.

We compute, for each species $j$, its prevalence
\[
\text{prev}_j = \frac{1}{n} \sum_{i=1}^n \mathbf{1}\{X^\dagger_{j i} > 0\}
\]
and mean abundance
\[
\overline{X^\dagger}_j = \frac{1}{n} \sum_{i=1}^n  X^\dagger_{j i}.
\]
Species are ranked first by decreasing prevalence, with ties broken by decreasing mean abundance. We select the top $K = 50$ species according to this ranking. Let $\mathcal{S}_{\text{top}} = \{j_1,...,j_K\}$ be the indices of $K$ species. The covariate matrix used for modeling is
\[
\bm X \in \mathbb{R}^{n \times K}, \qquad
X_{ik} = X^{\dagger\top}_{j_k, i}, \quad k = 1,\dots,K,
\]
where $\{j_1,\dots,j_K\} = \mathcal{S}_{\text{top}}$, and samples (rows) and species (columns) ordered so that the row names of $\bm X$ coincide with the sample IDs used for $Y$. 

We then construct the $\bm A$ matrix. For each of the $K$ species, we obtain its full lineage from kingdom through species and take the union over species to define the node set. If an internal node has exactly one child, we remove it and connect its parent directly to its child to avoid internal nodes with $1$ child. The resulting tree has $T$ total nodes (leaves plus internal nodes). As discussed in \Cref{sec:methods:model}, we encode the tree via the binary matrix
\[
\bm A \in \{0,1\}^{T \times K},
\]
where row $v$ corresponds to a node in the tree, column $j$ corresponds to a selected species, and $A_{v j} = 1$ if and only if species $j$ is a descendant of node $v$.

We use the following parameters for fitting KR-TEXAS to the SCFA data in \Cref{sec:analysis}: \texttt{max\_lambda = 0.15, nfolds = 4, eps = 1e-8, method = "LLR", gamma\_init\_strat = "small",
  num\_restarts\_stage\_1 = 10,
  num\_restarts\_stage\_2 = 10}. The value \texttt{max\_lambda = 0.15} is chosen as a valid upper bound on $\lambda$; 
at this value, $\hat{\bm{\gamma}} = \bm{0}$, so we avoid the computational expense of searching for such a
\texttt{max\_lambda}.

\begin{table}[H]
\centering
\caption{All 41 taxa selected by KR-TEXAS in SCFA analysis.}
\begin{minipage}{0.48\linewidth}
\centering

 \begin{table}[H]
\centering\begingroup\fontsize{8}{8}\selectfont
\begin{singlespace}
\begin{tabular}{cllp{2.2cm}}
\toprule
Rank & Taxon & Level & \textbf{$\hat{\gamma}\!\times\!\text{var}$}\\
\midrule
1 & Bacteroidetes & Phylum & 11.87\\
2 & \textit{Anaerostipes hadrus} & Species & 11.86\\
3 & Bacteroidales & Order & 10.80\\
4 & \textit{Dorea} & Genus & 6.44\\
5 & Firmicutes & Phylum & 5.43\\
6 & Bacteroidia & Class & 3.64\\
7 & Oscillospiraceae & Family & 3.08\\
8 & Coriobacteriia & Class & 2.68\\
9 & \textit{Oscillibacter} & Genus & 1.76\\
10 & \textit{Fusicatenibacter saccharivorans} & Species & 1.72\\
11 & \textit{Agathobaculum butyriciproducens} & Species & 1.66\\
12 & \textit{Parabacteroides} & Genus & 1.65\\
13 & Tannerellaceae & Family & 1.54\\
14 & \textit{Eubacterium rectale} & Species & 0.86\\
15 & \textit{Roseburia intestinalis} & Species & 0.85\\
16 & \textit{Roseburia} & Genus & 0.80\\
17 & Actinobacteria & Phylum & 0.75\\
18 & \textit{Parabacteroides merdae} & Species & 0.25\\
19 & Actinobacteria & Class & 0.20\\
20 & \textit{Oscillibacter sp\_57\_20} & Species & 0.18\\
21 & \textit{Roseburia sp\_CAG\_471} & Species & 0.18\\
22 & \textit{Faecalibacterium prausnitzii} & Species & 0.17\\
23 & \textit{Bacteroides} & Genus & 0.14\\
24 & Bacteroidaceae & Family & 0.14\\
25 & \textit{Roseburia faecis} & Species & 0.10\\
26 & Bifidobacteriaceae & Family & 0.10\\
27 & Bifidobacteriales & Order & 0.06\\
28 & \textit{Bifidobacterium} & Genus & 0.05\\
29 & \textit{Ruminococcus} & Genus & 0.03\\
30 & \textit{Ruminococcus bromii} & Species & 0.02\\
31 & \textit{Alistipes finegoldii} & Species & 0.01\\
32 & \textit{Alistipes putredinis} & Species & 0.01\\
33 & \textit{Coprococcus} & Genus & <0.01\\
34 & Lactobacillales & Order & <0.01\\
35 & \textit{Barnesiella intestinihominis} & Species & <0.01\\
36 & \textit{Blautia obeum} & Species & <0.01\\
37 & \textit{Bacteroides caccae} & Species & <0.01\\
38 & \textit{Coprococcus comes} & Species & <0.01\\
39 & \textit{Collinsella aerofaciens} & Species & <0.01\\
40 & \textit{Streptococcus} & Genus & <0.01\\
41 & \textit{Parabacteroides distasonis} & Species & <0.01\\
\bottomrule
\end{tabular}
\end{singlespace}
\endgroup{}
\end{table} 
\end{minipage}
\end{table}

% \input{appendix_arxiv.tex}
%This won't compile for local run but on arxiv I've flattened out the folder so it needs to be like this with no  references either.

\end{document}